\theoremstyle{plain}
\newtheorem{theorem}{Theorem}
\newtheorem{lemma}[theorem]{Lemma}
\newtheorem{definition}[theorem]{Definition}
\newtheorem{observation}[theorem]{Observation}
\newtheoremstyle{note}{\topsep}{\topsep}{\slshape}{}{\scshape}{}{ }{}
\theoremstyle{note}
\newcommand\tr{\operatorname{Tr}}
\newcommand\B{\widetilde{B}}
\newcommand{\<}{\langle}
\renewcommand{\>}{\rangle}
\newcommand\be{\begin{equation}}
\newcommand\ee{\end{equation}}
\newcommand\bea{\begin{array}}
	\newcommand\eea{\end{array}}
\newcommand\ben{\begin{eqnarray}}
\newcommand\een{\end{eqnarray}}
\newcommand\ot{\otimes}
\newcommand\bei{\begin{itemize}}
	\newcommand\eei{\end{itemize}}
\newcommand\bee{\begin{enumerate}}
	\newcommand\eee{\end{enumerate}}
\newcommand{\A}{\mathcal{A}_{n}'(d)}
\newcommand{\Hom}{\operatorname{Hom}}
\newcommand{\Span}{\operatorname{span}}
\begin{document} 	
\title{Minimal Port-based Teleportation}
\author{Sergii Strelchuk$^1$, Micha{\l} Studzi\'nski$^{2,}\footnote{corresponding author, email: studzinski.m.g@gmail.com}$}
	\affiliation{$^1$DAMTP, Centre for Mathematical Sciences, University of Cambridge, Cambridge CB30WA, UK\\
	$^2$Institute of Theoretical Physics and Astrophysics, University of Gda\'nsk, National Quantum Information Centre, 80-952 Gda\'nsk, Poland}
\begin{abstract}
There are two types of port-based teleportation (PBT) protocols: deterministic -- when the state always arrives to the receiver but is imperfectly transmitted and probabilistic -- when the state reaches the receiver intact with high probability. We introduce the minimal set of requirements that define a feasible PBT protocol and construct a simple PBT protocol that satisfies these requirements: it teleports an unknown state of a qubit with success probability $p_{succ}=1-\frac{N+2}{2^{N+1}}$ and fidelity $1-O(\frac{1}{N})$ with the resource state consisting of $N$ maximally entangled states. This protocol is not reducible from either the deterministic or probabilistic PBT protocol. We define the corresponding efficient superdense coding protocols which transmit more classical bits with fewer maximally entangled states. Furthermore, we introduce rigorous methods for comparing and converting between different PBT protocols.

\end{abstract}
\maketitle

\section{Introduction}

Quantum teleportation, introduced at the dawn of quantum information processing era, remains one of the pillars of quantum information transfer~\cite{bennett_teleporting_1993}. The concept of sending an unknown state of the quantum system without physically transferring the medium found its use in an impressive range of applications~\cite{pirandola2015advances}.  In recent years, Ishizaka and Hiroshima introduced new teleportation protocols with properties that were previously believed to be unattainable. These protocols go under a name of port-based teleportation (PBT)~\cite{ishizaka2008asymptotic} and they possess a counter-intuitive property that appears to be at odds with non-signalling principle of quantum mechanics, namely, that the teleported state requires no correction and is readily available for use after the sender performs a measurement and sends classical communication. This sparked a series of studies that investigate fundamental capabilities of different PBT protocols and their corresponding resource requirements such as the type of measurements and the amount of entanglement needed for the protocol to function~\cite{strelchuk_generalized_2013,stud2020A,Studzinski_2022,mozrzymas2021optimal,MozJPA,christandl2021asymptotic}. 

PBT protocols found wide-ranging applications in cryptography and instantaneous non-local computation~\cite{beigi2011simplified}, they were instrumental in establishing a link between interaction complexity and entanglement in non-local computation and holography~\cite{may2022complexity}, they established a fundamental link between quantum communication complexity advantage and a
violation of a Bell inequality~\cite{buhrman_quantum_2016}, fundamental limitations for quantum channels discrimination by designing adaptive protocols called PBT stretching~\cite{pirandola2019fundamental} and others~\cite{pereira2021characterising,quintino2021quantum,PhysRevLett.122.170502}.

Finding optimal protocols that work with quantum states of arbitrary dimensions necessitates the use of representation theory. The complexity of mathematical formalism precludes us from developing satisfying physical intuition about PBT protocols -- in stark contrast to the elegance of the first teleportation protocol introduced in 1993~\cite{bennett_teleporting_1993}. Moreover, the latter requires maximally-entangled states to operate most efficiently, whereas there exist flavours of port-based teleportation protocols which achieve the best-in-class performance using entangled states which are very different from maximally entangled states. 

The underlying motivation for this work comes from the seminal work of R. Werner~\cite{werner2001all} who considered a set of remarkably distinct problems which can be reduced to performing (the first 1993-style) quantum teleportation. The emergence of distinctly different teleportation protocols raised a number of fascinating questions about the fundamental building blocks of quantum information: what other forms of transferring quantum information from one subsystem to another exist? PBT protocols offer a distinct alternative to the first teleportation scheme. We thus were motivated by a series of questions: Are there many other distinct  protocols which result in quantum state transfer akin to the ordinary teleportation and PBT-like protocols? Can one treat the latter as a single class of protocols or, perhaps, there is a number of fundamentally different protocols within PBT? By now it has become clear that one can define numerous of PBT protocols, each different to another in that there was no straightforward way to provide a black-box reduction between them. Our work aims to address the above by providing two conceptual insights. 
First, we endeavour to derive PBT from `minimal' requirements. In the spirit of seminal research direction where one constructs quantum theories from fewest possible axioms~\cite{hardy2001quantum, masanes2011derivation}, we apply the same reasoning to the construction of teleportation protocols: what is the minimal number of requirements that yield a viable construction of the port-based teleportation? We introduce the first such `minimal' set of requirements that yield viable teleportation protocols together with the associated superdense coding schemes. Formalising the requirements immediately led to a construction of a new `lean’ or so-called minimal protocol. 
What's noteworthy, is that this protocol provably does not reduce to interpolation between the existing protocols and which has the exponentially improved  scaling of probability of success. Second, given two PBT protocols is there a systematic way of interconverting them without having to consider the fine-grained details of the implementation? An affirmative answer to this question would provide a black-box transformation of trade-offs between imperfect but guaranteed and perfect but probabilistic information transfer LOCC protocols, which is interesting in its own right. 


In our work, address both of these points: first, we outline the minimal set of assumptions that yield a working port-based teleportation protocol. We introduce a new protocol that cannot be reduced to any known protocols and that satisfies this set of minimal requirements. This protocol is exponentially more efficient than any known probabilistic PBT (pPBT), hence it cannot be obtained from any such protocol. At the same time, it uses special `denoised' measurement operators that provably cannot result in deterministic PBT (dPBT). 

Second, we develop methods for comparing port-based teleportation protocols in terms of their resource states which enable one to estimate their entanglement per port and distinctness. We further introduce methods for the conversion between the protocols and determine the conditions when it is possible to turn one type of PBT into another. We conclude by introducing a new family of superdense coding protocols which send more classical information using less entanglement compared to previously known protocol~\cite{ishizaka_remarks_2015}. 

In Section~\ref{prelim}, we provide self-contained operational and mathematical preliminaries. This is followed by a minimal set of assumptions that a PBT protocol must satisfy in Section~\ref{minimalrecs} together with a new, simplified PBT protocol. Remarkably, this remarkable simplicity does not come at a cost of reduced performance. To give a flavour of our results, in Table~\ref{table:entFPBT0} we collect known results on the performance of various qubits PBT schemes compared with the minimal PBT.
\begin{center}
\begin{table}[h!]
	\begin{tabular}{c|c|c}

Teleportation protocol & Entanglement fidelity $F$ & Average success probability $p_{succ}$\\
\hline
	Non-optimised deterministic PBT & $F=1-\mathcal{O}(1/N) $ & 1\\[0.1cm]

	Optimised deterministic PBT & $F=1-\mathcal{O}(1/N^2) $ & 1\\[0.1cm]

\hline
        Non-optimised probabilistic PBT & 1 & $p_{succ}=1-\mathcal{O}(1/\sqrt{N}) $ \\[0.1cm]

	Optimised probabilistic PBT & 1 & $p_{succ}=1-\mathcal{O}(1/N) $ \\[0.1cm]

 \hline
       {\bf Minimal PBT} & $F=1-\mathcal{O}(1/N) $ & $p_{succ}=1-\frac{N+2}{2^{N+1}}$

	\end{tabular}
\caption{Asymptotic behaviour of all known variants of PBT compared with introduced in this manuscript minimal PBT -- the qubit case. The minimal PBT offers exponentially better scaling in $N$ compared to optimised probabilistic PBT for average probability success, even with the non-optimised resource state included in this table. For entanglement fidelity, minimal PBT offers the same scaling with the number of ports $N$. In Section~\ref{intrpolated_prot} we present a more detailed discussion on the efficiency of mPBT protocol.}
	\label{table:entFPBT0}
\end{table}
\end{center} 

In Section~\ref{converting} we show how to convert different types of PBT protocols. The non-trivial regime that needs to be rigorously investigated is transforming pPBT into dPBT.

In Section~\ref{sec:fids} we study PBT protocols from a different viewpoint, by investigating the properties of their respective resource states. We introduce partial ordering on the PBT resource states by  considering the fidelity between resource states. We find that even in the case of two similarly performing protocols, their underlying resource states are drastically different. This ordering enables us to find the most `frugal' teleportation protocol: the one which achieves the highest performance with substantially lower entanglement requirements. 

Lastly, in Section~\ref{superdense} we turn to superdense coding protocols, which are dual to any teleportation protocol. While ordinary superdense coding protocols are well-understood in the context of original teleportation, very little is known about their dual-PBT versions, with only one known example in~\cite{ishizaka_remarks_2015}. We show how to take an arbitrary dPBT protocol with an established lower bound on fidelity and compute the corresponding performance of the superdense coding protocol. In particular, we find that there exist superdense coding protocols that are capable of transmitting the same amount of classical information as in~\cite{ishizaka_remarks_2015}, but using significantly less entanglement. The appendix contains the necessary source code to calculate all the quantities throughout the paper.

\section{Preliminaries}\label{prelim}
\subsection{Operational preliminaries of PBT}
\label{prelim:A}
The aim of any teleportation protocol is to transfer a given unknown quantum state $\psi_C$ from one party (Alice) to another (Bob). This is achieved by first distributing a resource state $\rho_{AB}$, the sender holding subsystem labelled $A$ and the receiver holding $B$. Then Alice performs a joint measurement from a set $\{M_i\}_i$ on $\psi_C\otimes\rho_{AB}$ and records the outcome $i\in {\mathds{N}}_0$. The latter is then communicated to Bob using a classical channel. After Bob receives $i$, he performs one of two actions (depending on the particulars of the protocol): he either discards all the subsystems except for the one that is identified by $i$ or he aborts the protocol (if $i$ encodes the error flag, for example when $i=0$). In the first case, the teleported state of $\psi_C$ is located in the subsystem identified by $i$.

In what follows we present a short summary of the PBT schemes. All the results contained here have been proven elsewhere and for more details, we encourage readers to see the following papers where the formalism of PBT has been developed~\cite{ishizaka2008asymptotic,ishizaka_quantum_2009,Studzinski2017,StuNJP,Studzinski_2022}.

We start with introducing resources used by parties wishing to apply the PBT scheme. Namely, parties share $N$ copies of $d$-dimensional maximally entangled states, each of them called \textit{port}:
\be
\label{resource}
|\Psi\>_{AB}=(O_A \otimes \mathbf{1}_B)|\Psi^+\>_{AB}=(O_A \otimes \mathbf{1}_B)|\psi^+\>_{A_1B_1}\otimes |\psi^+\>_{A_2B_2}\otimes \cdots \otimes |\psi^+\>_{A_NB_N} \ ,
\ee
where $A=A_1A_2\cdots A_N$, $B=B_1B_2\cdots B_N$, and $O_A$, with normalisation constraint $\tr(O_A^{\dagger}O_A)=d^N$, is a global operation applied by Alice to increase the efficiency of the protocol~\cite{ishizaka_quantum_2009,Studzinski2017,StuNJP}. In the case when  $O_A=\mathbf{1}_A$, we deal with $N$ maximally entangled state and call the total state $|\Psi\>_{AB}=|\Psi^+\>_{AB}$ as a  \textit{non-optimised resource state} and the whole protocol is called \textit{non-optimal PBT}. For optimal schemes,  the explicit forms of $O_A$ are presented in Appendix~\ref{summary} for the reader's convenience, as well as in cited above literature. Further  we refer to the state $|\Psi\>_{AB}=(O_A \otimes \mathbf{1}_B)|\Psi^+\>_{AB}$ as a \textit{optimised resource state} and the whole protocol is called \textit{optimal PBT}. Counterintuitively, due to the fact that $O_A\neq \mathbf{1}_A$ and $O_A$ is not unitary, the resource state in optimal schemes yields superior performance while not being maximally entangled. 

Presently, there are two flavours of the PBT protocols: 
\begin{itemize}
    \item \textit{Deterministic protocol (dPBT):} An unknown quantum state $\psi_{C}$ is always transmitted to the receiver but the transmission is imperfect.
    The teleportation channel $\mathcal{N}_{C\rightarrow \B}$  is of the following form:
\be
\label{ch1}
\begin{split}
	\mathcal{N}_{C\rightarrow \B}\left(\psi_{C} \right)=\sum_{i=1}^N \tr_{AC}\left[\widetilde{\Pi}_{i}^{AC} \left(\left(O_A\ot \mathbf{1}_{\B} \right)\sigma^{A\B}_i \left(O_A^{\dagger}\ot \mathbf{1}_{\B} \right) \ot \psi_{C}\right)\right],
\end{split}
\ee
 where by $\tr_{AC}$ denotes partial trace over all systems $AC$  but $\B$.  The states $\sigma_{A_i\B}$ are called \textit{signal states}:
\be
\label{eq:signals}
\sigma^{A\B}_i:= \frac{1}{d^{N-1}}\mathbf{1}_{\overline{A}_i}\otimes P^+_{A_i\B} \ ,
\ee
where $P^+_{A_i\B}$ is projector on maximally entangled state between systems $A_i$ and $\B$. 
To assess the quality of the protocol, one can evaluate entanglement fidelity $F(\mathcal{N}_{C\rightarrow \B})$ of teleportation channel $\mathcal{N}_{C\rightarrow \B}$ when teleporting a subsystem $C$ of maximally entangled state $P^+_{CD}$, and computing overlap with the state after perfect transmission $P^+_{\B D}$:
\be
\label{F_det}
F(\mathcal{N}_{C\rightarrow \B})=\tr\left[P^+_{\B D}(\mathcal{N}_{C\rightarrow \B}\otimes \mathbb{1}_D)(P^+_{CD})\right]=\frac{1}{d^2}\sum_{i=1}^N\tr\left[\left(O_A^{\dagger}\otimes \mathbf{1}_{\B}\right)\widetilde{\Pi}_i^{A\B}\left(O_A\otimes \mathbf{1}_{\B}\right)\sigma^{A\B}_i\right],
\ee
where $\mathbb{1}_D$ denotes identity channel leaving system $D$ untouched.
For an arbitrary dimension $d$ the fidelity $F(\mathcal{N}_{C\rightarrow \B})$ has been evaluated explicitly using methods coming from group representation theory~\cite{Studzinski2017,StuNJP,majenz2}.
Due to the recent result presented in \cite{leditzky2020optimality}, we know that measurements in the form of \textit{square-root measurements} (SRM) are optimal in both PBT versions (non- and optimal PBT). The optimal measurements in the non-optimal case are:
\begin{equation}
	\label{eq:measurements}
	\forall 1\leq i\leq N \qquad \Pi_i^{AC}=\frac{1}{\sqrt{\rho}}\sigma_{A_iC}\frac{1}{\sqrt{\rho}},\quad \text{where}\quad \rho=\sum_{i=1}^N\sigma_{A_iC}.
\end{equation}
The operator $\rho^{-1}$ is restricted to the support of $\rho$, so to ensure summation of all POVMs to identity $\mathbf{1}_{AC}$ on the whole space $(\mathbb{C}^d)^{\otimes N+1}$, we add to every $\Pi_i^{AC}$ an excess term $\Delta/N$, where $\Delta=\mathbf{1}_{AC}-\sum_{i=1}^N\Pi_i^{AC}$. However, this extra term does not change the entanglement fidelity $F(\mathcal{N}_{C\rightarrow \B})$ of the channel $\mathcal{N}_{C\rightarrow \B}$, see also~\cite{ishizaka2008asymptotic,ishizaka_quantum_2009,Studzinski2017} for the detailed explanation.
    
    \item \textit{Probabilistic protocol (pPBT):} From~\cite{ishizaka_quantum_2009} we know that in the  probabilistic scheme, the process of teleportation sometimes fails, however when it succeeds the state is faithfully teleported to Bob with fidelity $F(\mathcal{N}_{C\rightarrow \B})=1$. In this scheme Alice has access to $N+1$ POVMs $\{M_0^{AC},M_1^{AC},\ldots,M_N^{AC}\}$ with measurement $M_0^{AC}$ corresponding to the failure of teleportation procedure. An additional POVM $M_0^{AC}$ makes the teleportation channel $\mathcal{N}_{C\rightarrow \B}$ trace non-preserving. The efficiency of the protocol is described by the average probability of success $p_{succ}$ of the scheme equals to~\cite{ishizaka_quantum_2009,Studzinski2017}:
\begin{equation}
\label{psucca}
p_{succ}=\frac{1}{d^{N+1}}\sum_{i=1}^N\tr\left[\widetilde{M}_i^{AC}\right],
\end{equation}
where $\widetilde{M}_i^{AC}=O_A^{\dagger}M_i^{AC}O_A$. In the same manner, we define a probability of failure $p_{fail}$ corresponding to POVM $M_0^{AC}$ averaged over all input states which of course satisfies relation $p_{fail}=1-p_{succ}$.
The requirement of perfect transmission $F(\mathcal{N})=1$ constraints form of allowed measurements accessible for Alice, namely can be of the following form~\cite{ishizaka_quantum_2009,Studzinski2017}:
\begin{equation}
\label{ort}
\forall 1\leq i\leq N \qquad \widetilde{M}_i^{AC}=P^+_{A_iC}\otimes \Theta_{\overline{A}_i}.
\end{equation}
where $\overline{A}_i$ denotes all states $A_1A_2\cdots A_N$ but $A_i$.
The optimal form of the operators $\Theta_{\overline{A}_i}$ in both versions of the pPBT protocols (optimal and non-optimal) was computed for qubits and qudits in~\cite{ishizaka_quantum_2009,Studzinski2017}.
\end{itemize}
In both cases (deterministic and probabilistic) we used $O_A$ to denote the optimising operation. However, when the  operators differ we will further write $\widetilde{O}_A$ for  dPBT and $O_A$ for pPBT. Such a distinction will be very helpful in Section~\ref{sec:fids} where we calculate overlaps between the resource states. In the case of pPBT protocol we also use the notion of non-optimised and optimised resource state as it was done above for dPBT scheme.

\subsection{Mathematical preliminaries of PBT}
\label{math_intro}

We will now review basic representation-theoretic preliminaries. We briefly describe here the irreducible representation formalism for the permutation group $S(n)$ with its algebra $\mathbb{C}[S(n)]$~\cite{FultonSchur,Schur-Weyl} and collect the main results regarding the algebra of partially transposed permutation operators $\A$~\cite{MozJPA,Studzinski_2022}. 

A permutational representation is a map $V:S(n)\rightarrow \Hom(\mathcal{(\mathbb{C}}^{d})^{\otimes n})$ of the symmetric group $S(n)$, where $n=N+1$, in the space $\mathcal{H\equiv (\mathbb{C}}^{d})^{\otimes n}$ defined as
	\be
	\label{repV}
	\forall \pi \in S(n)\qquad V(\pi ).|e_{i_{1}}\>\otimes |e_{i_{2}}\>\otimes
	\cdots \otimes |e_{i_{n}}\>:=|e_{i_{\pi ^{-1}(1)}}\>\otimes |e_{i_{\pi
			^{-1}(2)}}\>\otimes \cdots \otimes |e_{i_{\pi ^{-1}(n)}}\>,
	\ee
where the set $\{|e_{i}\>\}_{i=1}^{d}$ is an orthonormal basis of the space $\mathcal{\mathbb{C}}^{d}$, and $d$ stands for the dimension. We drop here the lower index in every $i$, since it labels only position of the basis in tensor product $(\mathbb{C}^{d})^{\otimes n}$.
In other words, the operators $V(\pi)$ just permute basis vectors according to the given permutation $\pi\in S(n)$. The representation $V$ extends to the
representation of the group algebra $\mathbb{C}[S(n)]:=\Span_{\mathbb{C}}\{V(\sigma ):\sigma \in S(n)\}\subset \Hom(\mathcal{(\mathbb{C}}^{d})^{\otimes n})$.
All irreducible representations (irreps) of the symmetric group $S(n)$ are labeled by so-called  \textit{partitions}. A partition $\alpha$ of a natural number $n$, which we denote as $\alpha \vdash n$, is a sequence of positive numbers $\alpha=(\alpha_1,\alpha_2,\ldots,\alpha_r)$, such that $\alpha_1\geq \alpha_2\geq \cdots \geq \alpha_r$ and $\sum_{i=1}^r\alpha_i=n.$
Every partition can be visualised as a \textit{Young frame} which is a collection of boxes arranged in left-justified rows. For illustration please see panel {\bf I} in Figure~\ref{YngBox}. It means that for every fixed number $n$, the number of Young frames determines the number of nonequivalent irreps of $S(n)$ in an abstract decomposition.
\begin{figure}[h]
\centering
 \includegraphics[width=0.6\columnwidth,keepaspectratio,angle=0]{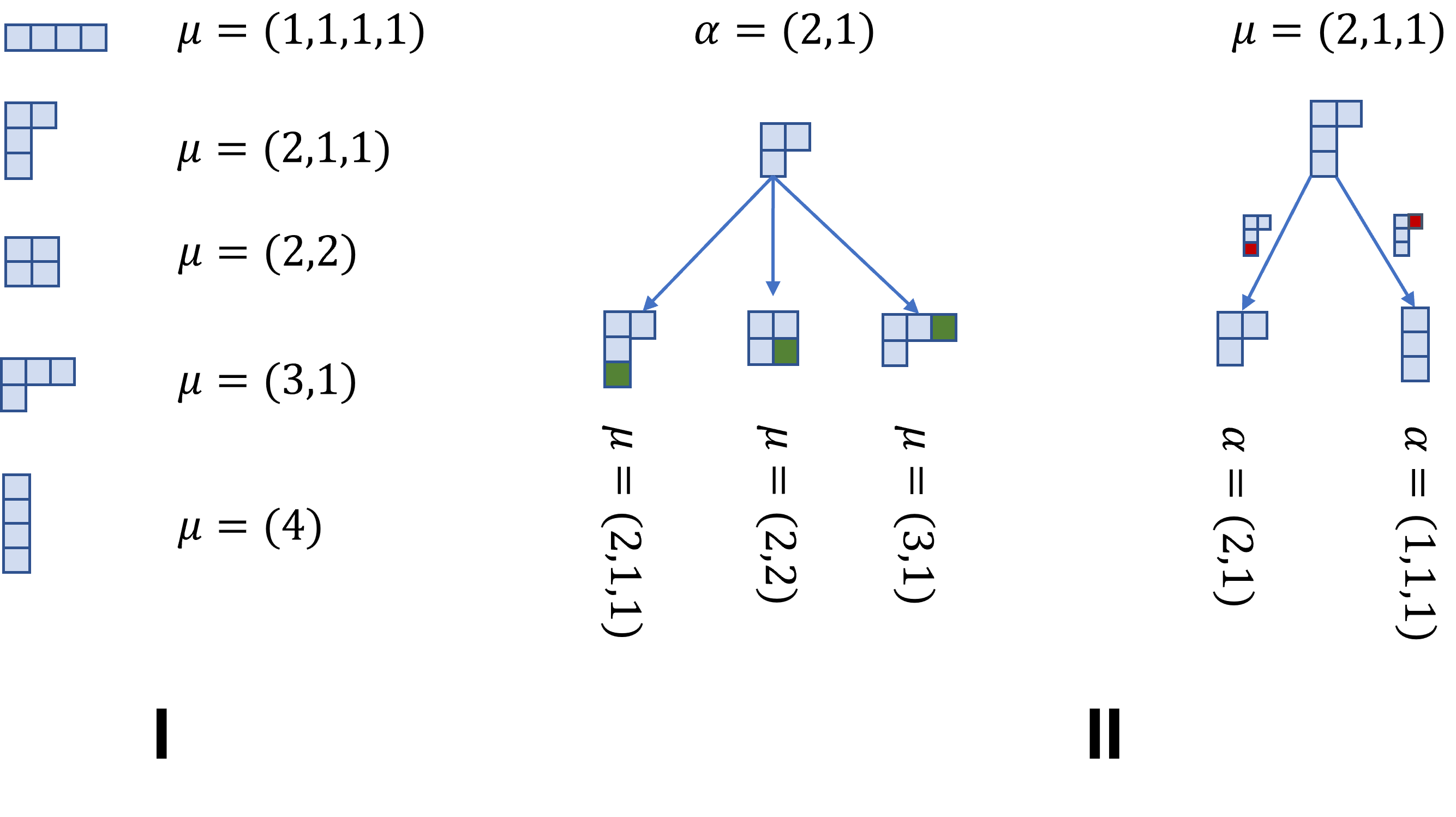}
\caption{Panel {\bf I} depicts five possible Young frames for $n=4$ corresponding  to all possible abstract irreducible representations of the group $S(4)$. Its representation space is $(\mathbb{C}^d)^{\otimes 4}$ and the only irreps that appear are those for which the height of corresponding Young frames is no larger than $d$. In particular, when one considers qubits ($d=2$) we have only three admissible frames: $(4),(3,1),(2,2)$. Panel {\bf II} presents all possible Young frames $\mu \vdash 4$ satisfying relation $\mu \in \alpha$ for $\alpha=(2,1)$. 
Green squares depict boxes that are added to the initial frame $\alpha$. On the right, we present all possible Young frames $\alpha \vdash 3$ satisfying relation $\alpha \in \mu$ for $\mu=(2,1,1)$. Boxes subtracted from the initial Young frame $\mu=(2,1,1)$ are shown in red.} 
\label{YngBox}
 \end{figure}
  If one fixes the representation space to $\mathcal{H\equiv (\mathbb{C}}^{d})^{\otimes n}$ then when we decompose $S(n)$ into irreps we take only Young frames $\alpha$ whose height $h(\alpha)$ is at most $d$.
Let us take now $\alpha \vdash (n-1)$ and $\mu \vdash n$. By writing $\mu \in \alpha$ we understand Young frames $\mu$ obtained from $\alpha$ by adding a single box. On the opposite, $\alpha \in \mu$ denotes Young frames $\alpha$ obtained from $\mu$ by removing a single box -- see panel {\bf II} in Figure~\ref{YngBox}.

Recall the celebrated Schur-Weyl duality~\cite{FultonSchur,Schur-Weyl}, which states that the diagonal action of the unitary group $\mathcal{U}(d)$ of invertible complex matrices and of the symmetric group $S(n)$ on $(\mathbb{C}^d)^{\otimes n}$ commute:
\be
\label{comm1}
[V(\sigma),U\otimes \cdots \otimes U]=0,
\ee
where $\sigma \in S(n)$ and $U\in \mathcal{U}(d)$.  In particular, it means that there exists a basis producing the decomposition
of $V(\pi)$ and $U^{\otimes n}$ into irreps simultaneously, and decomposition of the  tensor product space $(\mathbb{C}^d)^{\otimes n}$ as:
\be
\label{eq:SW}
(\mathbb{C}^d)^{\otimes n}=\bigoplus_{\substack{\alpha \vdash n \\ h(\alpha)\leq d}} \mathcal{U}_{\alpha}\otimes \mathcal{S}_{\alpha}.
\ee
In the above expression the symmetric group $S(n)$ acts non-trivially on the space $\mathcal{S}_{\alpha}$ and the unitary group $\mathcal{U}(d)$ acts non-trivially on the space $\mathcal{U}_{\alpha}$, labelled by the same partitions $\alpha$. 

For a given irrep $\alpha$ of $S(n)$, the space $\mathcal{U}_{\alpha}$ is a multiplicity space of dimension $m_{\alpha}$ (multiplicity of irrep $\alpha$), while the space $\mathcal{S}_{\alpha}$ is a representation space of dimension $d_{\alpha}$ (dimension of irrep $\alpha$).
With every subspace $\mathcal{U}_{\alpha}\otimes \mathcal{S}_{\alpha}$ we associate the \textit{Young projector}:
\be
\label{Yng_proj}
P_{\alpha}=\frac{d_{\alpha}}{n!}\sum_{\sigma \in S(n)}\chi^{\alpha}(\sigma^{-1})V(\sigma),\quad \text{with}\quad \tr P_{\alpha}=m_{\alpha}d_{\alpha},
\ee
where $\chi^{\alpha}(\sigma^{-1})$ is the irreducible character associated to the irrep indexed by $\alpha$.  To denote a matrix representation of an irrep of $\sigma \in S(n)$ indexed by a frame $\alpha$ we will write  $\varphi^{\alpha}(\sigma)$.

Equation~\eqref{eq:SW} gives a way to describe irreducible representations -- the minimal non-trivial blocks commuting with a diagonal action of the unitary group $\mathcal{U}(d)$, i.e. action of the form $U^{\otimes n}$, for some natural $n$, and $U\in \mathcal{U}(d)$. Such reduction, in addition to displaying the interior structure of the operators allows one to specialize the analysis on the whole space (which is typically very large and complex) to every small block separately thus significantly reducing the complexity of the problem (this is especially helpful when problems involve semidefinite programming). When working with the PBT we need to introduce different type of symmetries, however, still motivated by the Schur-Weyl duality discussed above. To describe them we first briefly discuss the properties of states and measurements used in all variants of PBT described in Subsection~\ref{prelim:A}.

Recall that a bipartite maximally entangled state is $U\otimes \overline{U}$ invariant~\cite{HorodeckiMPRFidelity}, where the bar denotes complex conjugation of an element $U$ of the unitary group $\mathcal{U}(d)$. It means that all signal states $\sigma_i^{A\B}$ from~\eqref{eq:signals} satisfy the following commutation rule:
\begin{equation}
\label{sym1}
\begin{split}
[U^{\otimes N}\otimes \overline{U},\sigma_i^{A\B}]=0,\quad \forall \ U\in \mathcal{U}(d),
\end{split}
\end{equation}
where $\overline{U}$ acts on $\B$, and $U^{\otimes N}$ acts on systems $A=A_1\cdots A_N$. Additionally, from the construction of the signal states $\sigma_i^{A\B}$  it follows that they are covariant with respect to the elements from the group $S(N)$, acting on first $N$ systems:
\begin{equation}
\begin{split}
V(\pi)\sigma_i^{A\B}V^{\dagger}(\pi)=\sigma_{\pi(i)}^{A\B},\quad \forall \ \pi\in S(N).
\end{split}
\end{equation}
In particular, choosing an arbitrary state from the set, say $\sigma_N^{A\B}$ all the others can be generated by acting on it with an element from the coset $S(N)/S(N-1)$,
whose elements in the representation $V$ are of the form $V[(i,N-1)]$, for $i=1,\ldots,N-1$. The same kind of covariance with respect to $S(N)$ and $U^{\otimes N}\otimes \overline{U}$ also holds for all measurements $\{\Pi_i^{AC}\}_{i=1}^N$ in PBT.
Finally, since the operator $\rho$ from~\eqref{eq:measurements} is a sum over all possible signal states it also exhibits symmetries described in~\eqref{sym1} and in addition, it is invariant with respect to the action of elements from the group $S(N)$.

Satisfying the relation~\eqref{sym1} by operators describing all variants of PBT protocols means that they belong to the \textit{algebra of partially transposed permutation operators}, where the partial transposition is taken with respect to the last $n-$th system~\cite{Moz1,MozJPA,Stu1}:
\be
	\A:= \Span_{\mathbb{C}}\{V'(\sigma ):\sigma \in S(n)\},
	\ee
 where for simplicity $'$ denotes the partial transposition under consideration.
The algebra $\A$ is no longer a group algebra since for example, one has $V'V'=dV'$, while in $\mathbb{C}[S(n)]$ we have $VV=\mathbf{1}$.
Elements $V'(\sigma)$ commute with skew-diagonal action of $U^{\otimes (N-1)}\otimes \overline{U}$, where $U\in \mathcal{U}(d)$. This is an analog of the relation from~\eqref{comm1} between the permutation group $S(n)$ and unitary group $\mathcal{U}(d)$. This means we should expect analogous decomposition of the space $(\mathbb{C}^d)^{\otimes n}$ to \eqref{eq:SW}, when studying objects from $\A$. 
 Due to the results contained in~\cite{Studzinski2017} we know that a similar  decomposition exists, with the corresponding version of Young projectors from~\eqref{Yng_proj} on irreducible spaces. We denote these projectors as $F_{\mu}(\alpha)$, and they are labelled by two types of Young diagrams $\alpha \vdash (n-2)$ and $\mu \vdash (n-1)$, such that $\mu\in \alpha$. 
In particular, it was shown in~\cite{Studzinski2017} that the operator $\rho$ from~\eqref{eq:measurements} decomposes in terms of irreducible projectors $F_{\mu }(\alpha )$ as
\begin{equation}
\label{rho_spectral}
\rho=\sum_{\alpha \vdash N-1}\sum_{\mu \in \alpha}\lambda_{\mu}(\alpha)F_{\mu}(\alpha)
\end{equation}
with non-zero eigenvalues $\lambda_{\mu}(\alpha)$ of the form
	\begin{equation}
\label{llambda}
\lambda_{\mu}(\alpha)=\frac{1}{d^N}\gamma_{\mu}(\alpha)=\frac{N}{d^{N}}\frac{m_{\mu}d_{\alpha}}{m_{\alpha}d_{\mu}}.
\end{equation}

The above decomposition allows for easy calculation of the inversion $\rho^{-1}$ necessary for having explicit form of the measurements from~\eqref{eq:measurements}.
We also introduce notation $\gamma_{\mu^*}(\alpha)$, meaning that for a given diagram $\alpha \vdash (n-2)$ we choose such $\mu\in\alpha$ for which $\gamma_{\mu}(\alpha)$ from~\eqref{llambda} is maximal, i.e.
\begin{equation}
\label{lambdamax}
\gamma_{\mu^*}(\alpha):=\max_{\mu\in\alpha}\gamma_{\mu}(\alpha).
\end{equation}

\section{Port-based teleportation with minimal requirements}\label{minimalrecs}

To introduce the minimal set of requirements that defines a PBT protocol, consider the following sequence of steps outlined in the box below.
\begin{tcolorbox}
[ title={PBT protocol ${\cal P}$}]
{\bf INPUT} 

 $n\ge 0$, a shared $2n-qubit$ state $\rho_{AB}^{(n)}$ between sender $A$ and receiver $B$; a state $\psi_C$ to be teleported; a set measurements $\{{\cal M}_{A,i}\}_{i=1}^k$, where without loss of generality we have $k=n+1$. The instantiated protocol is denoted as ${\cal P}_n(\{{\cal M}_{A,i}\}_{i=1}^k,\rho_{AB}^{(n)},\psi_C)$.

{\bf ALGORITHM}

\begin{itemize}
\item Alice performs a measurement ${\cal M}_{A,i}$ on $\rho_{AB}^{(n)}\otimes \psi_C$, obtaining outcome $i\in [1,\ldots, k]$.
\item Alice sends the index $i$ to Bob by classical channel; 
\end{itemize}

{\bf OUTPUT}

If $i\in [1,\ldots,k]$, then the teleported state is $\rho_{B_{i}}^{(n)}$. Otherwise, return "FAIL".

\end{tcolorbox}\label{PBTalgorithm}
With each ${\cal P}_n(\{{\cal M}_{A,i}\}_{i=1}^k,\rho_{AB}^{(n)},\psi_C)$ we associate a two-parameter estimate $Q({\cal P}_n(\{{\cal M}_{A,i}\}_{i=1}^k,\rho_{AB}^{(n)})) =(F(\sigma_A,\rho^{(n)}_{B_{i}}), p)$ that describes the performance of a teleported state. The first parameter characterises the quality of teleported state and the second --  the success of the teleportation. Please notice that the parameter $Q$ does not depend on the input state $\psi_C$. In the general situation, one could take $k\geq n+1$ measurements, but since only first $n$ of them give contribution to $F$ and $p$ the rest can be treated as a one big measurement labelled by index $i=n+1$.

\begin{definition}
A PBT protocol $\cal P$ from the above algorithm from the box is {\it convergent} if $Q({\cal P})\to (1,1)$ as $n\to\infty$. 
\end{definition}

\subsection{Minimal PBT protocol (mPBT)}
\label{intrpolated_prot}

In what follows, we introduce the convergent PBT protocol that while structurally similar to both pPBT and dPBT. However, while sharing many of the common building blocks with the d(p)PBT this `minimal' PBT protocol cannot be derived from either of them. The performance of our protocol interpolates between figures of merit -- the fidelity and probability of success -- between deterministic and probabilistic scheme respectively. Our protocol cannot be improved to get perfect fidelity or the probability of success with finite amount of resources (quantified as a number number of shared ports $N$). However, this protocol maintains exponentially faster convergence to unity, compared to the optimal pPBT. To construct the measurement operators for mPBT, we exploit the SRM that are used in dPBT, but crucially we omit an extra error term $(1/N)\Delta$  defined below expression~\eqref{eq:measurements}:
\begin{equation}
\label{srm}
\forall 1\leq i\leq N \quad \Pi_i^{AC}=\rho^{-1/2}\sigma_i^{AC}\rho^{-1/2}\quad \text{where}\quad \rho=\sum_{i=1}^N\sigma_i^{AC}.
\end{equation}

It is clear that the POVM elements do not sum up to identity on the whole space $(\mathbb{C}^d)^{\otimes n}$, where $n=N+1$. To fix that and thus recover their proper probabilistic interpretation we must add an additional operator denoted here $M_0^{AC}$. This additional POVM element, in this variant, equals exactly to $\Delta$. While operationally, it corresponds the failure  of the teleportation process, we show in Appendix~\ref{summary} that it bears no resemblance to any known $M_0$ (failure) operators which were used for probabilistic teleportation schemes. Note that we drop the constraint from equation~\eqref{ort}, so the state is not teleported faithfully in this probabilistic scheme (see later discussion in this section).

The probability of success and fidelity of our protocol are given below:
\begin{theorem}
\label{pSRM}
The probability of success $p_{succ}$ in mPBT, with the non-optimised resource state, when one uses square-root measurements $\{\Pi_i^{AC}\}_{i=1}^N$ from~\eqref{srm} has the form:
\begin{equation}
\label{pqudits}
p_{succ}=\frac{1}{d^{N+1}}\sum_{\alpha \vdash N-1}\sum_{\mu \in \alpha}d_{\mu}m_{\alpha},
\end{equation}

where $m_{\alpha}, d_{\mu}$ denote multiplicities and dimension of irreps of $S(N-1)$ and $S(N)$ respectively in the Schur-Weyl duality. The success probability for qubits $p_{succ}$ has the form:
\begin{equation}
\label{pqubits}
p_{succ}=1-\frac{N+2}{2^{N+1}}.
\end{equation}

\end{theorem}

\begin{proof}
To prove expression for the probability of success $p_{succ}$ we use equation~\eqref{psucca} with $\widetilde{O}_A=\mathbf{1}_A$ and explicit form of measurements from~\eqref{srm}:
\begin{equation}
\label{psa}
p_{succ}=\frac{1}{d^{N+1}}\sum_{i=1}^N\tr\left[\Pi_i^{AC}\right]=\frac{1}{d^{N+1}}\sum_{i=1}^N\tr\left[\rho^{-1/2}\sigma_i^{AC}\rho^{-1/2}\right]=\frac{1}{d^{N+1}}\tr(\rho^{-1/2}\rho\rho^{-1/2})=\frac{1}{d^{N+1}}\tr(\mathbf{1}_{\operatorname{supp}(\rho)}),
\end{equation}
where $\mathbf{1}_{\operatorname{supp}(\rho)}$ is the identity operator on the support of $\rho$ from~\eqref{srm}.
From paper~\cite{Studzinski2017} we know that $\mathbf{1}_{\operatorname{supp}(\rho)}$ has decomposition into sum of projectors $F_{\alpha}(\mu)$ on irreducible spaces of the algebra $\mathcal{A}'_n(d)$:
\begin{equation}
\label{34}
\mathbf{1}_{\operatorname{supp}(\rho)}=\sum_{\alpha \vdash N-1}\sum_{\mu \in \alpha}F_{\alpha}(\mu), \quad \text{so}\quad \tr(\mathbf{1}_{\operatorname{supp}(\rho)})=\sum_{\alpha \vdash N-1}\sum_{\mu \in \alpha}d_{\mu}m_{\alpha},
\end{equation}
since $\tr F_{\alpha}(\mu) =d_{\mu}m_{\alpha}$.
This leads us to the first statement from the theorem. To prove the second part note that in the qubit case all Young diagrams are up to two rows, and they are always of the form $\mu=(N-l,l)$, where $1\leq l\leq \left \lceil \frac{N}{2}-1\right\rceil$. It is possible to derive closed-form expressions for dimensions and multiplicities of the corresponding irreps. From~\cite{fulton_harris}, we know that for any partition of the shape $\mu=(N-l,l)$ the following expression for the corresponding of irrep dimension and multiplicity:
\be
	d_\mu =\binom{N}{l}-\binom{N}{l-1}=\frac{(N-2l+1)}{N+1}\binom{N+1}{l},\quad m_\mu=N-2l+1.
	\label{eq:m_d_alpha}
\ee
However, in expression~\eqref{34} we have to consider two types of irreps, irreps $\alpha \vdash N-1$ and irreps $\mu \vdash N$, for which we have the relation $\mu\in \alpha$. The trace in~\eqref{34} can be written as
\begin{equation}
\tr(\mathbf{1}_{\operatorname{supp}(\rho)})=\sum_{\mu\vdash N}\sum_{\alpha \in \mu}d_{\mu}m_{\alpha},
\end{equation}
where $\alpha \in\mu$ denotes partitions $\alpha \vdash N-1$ obtained from partition $\mu \vdash N$ by removing a single box. We have two types of partitions $\alpha$. The first type is when $N-1=2k$, for $\left \lceil \frac{N}{2}-1\right\rceil$, then the corresponding Young tableaux has two equal rows. In this case, we can add a box only to the first row. In the second type of partitions, when $N-1=2k+1$. In this case, we can add a single box to the first or second row. Using this observation and expressions~\eqref{eq:m_d_alpha},  we can write terms $d_{\mu}m_{\alpha}$ for Young tableaux satisfying relation $\alpha \in\mu$:
\begin{enumerate}[I)]
\item if $N-1=2k$, then 
\begin{equation}
\tr(\mathbf{1}_{\operatorname{supp}(\rho)})=N^{2}+\sum_{1\leq l<k}(N+1)\binom{%
N}{l}\frac{(N-2l)^{2}}{(N+1-l)(l+1)}+\frac{2}{N+1}\binom{N+1}{k},
\end{equation}

\item if $N-1=2k+1,$ then 
\begin{equation}
\tr(\mathbf{1}_{\operatorname{supp}(\rho)})=N^{2}+\sum_{1\leq l<k}(N+1)\binom{%
N}{l}\frac{(N-2l)^{2}}{(N+1-l)(l+1)}.
\end{equation}
\end{enumerate}
These two expressions can be combined to get one closed expression of the form
\begin{equation}
\tr(\mathbf{1}_{\operatorname{supp}(\rho)})=\sum_{l=0}^{\left \lceil \frac{N}{2}-1\right\rceil} (N+1){N \choose l} \frac{(N-2l)^2}{(N+1-l)(l+1)}.
\end{equation}
Finally, to show expression~\eqref{pqubits} we use the following chain of equalities:
\begin{equation}
\begin{split}
p_{succ}&=\frac{1}{2^{N+1}}\sum_{l=0}^{\left \lceil \frac{N}{2}-1\right\rceil} (N+1){N \choose l} \frac{(N-2l)^2}{(N+1-l)(l+1)}=\frac{N+1}{2^{N+2}}\sum_{l=0}^{N}\frac{(N-2l)^2}{(l+1)(N-l+1)}\\
&=\frac{1}{2^{N+2}}\frac{1}{N+2}\sum_{l=0}^N (N-2l)^2{N+2\choose l+1}.
\end{split}
\end{equation}
Now, changing range of the sum and subtracting proper terms, we arrive to
\begin{equation}\label{explicitformula}
\begin{split}
p_{succ}&=\frac{1}{2^{N+2}}\frac{1}{N+2}\left[\sum_{l=0}^{N+2}\left({N+2\choose l}(N-2(l-1))^2\right)-2(N+2)^2\right]\\
&=\frac{1}{2^{N+2}}\frac{1}{N+2}\sum_{l=0}^{N+2}\left[\left({N+2\choose l}((N+2)^2-4(N+2)l+4l^2)\right)-2(N+2)^2\right]\\
&=\frac{1}{2^{N+2}}\frac{1}{N+2}\left[\sum_{l=0}^{N+2}\left({N+2\choose l}(N+2)^2-4(N+2)l{N+2\choose l}+4l^2{N+2\choose l}\right)-2(N+2)^2\right]\\
&=\frac{1}{2^{N+2}}\frac{1}{N+2}\left[2^{N+2}(N+2)^2-4(N+2)^22^{N+1}+4(N+2+(N+2)^2)2^N-2(N+2)^2\right]\\
&=1-\frac{N+2}{2^{N+1}},
\end{split}
\end{equation}
where in the third line we use identities $\sum_{l=0}^N{N\choose l}=2^N$, $\sum_{l=0}^{N}l{N\choose l}=N2^{N-1}$, and $\sum_{l=0}^Nl^2{N\choose l}=N(N+1)2^{N-2}$. This finishes the proof.
\end{proof}
In the right panel of figure~\ref{mPBT vs pPBT} we present actual values for $p_{succ}$ obtained from Theorem~\ref{pSRM} for qubits and we compare it with optimal values for the pPBT introduced in~\cite{ishizaka_quantum_2009}. Additionally, in Figure~\ref{mPBTd vs pPBTd} we plot efficiency characteristic for higher dimensions ($d=4$) by exploiting group theoretical expression  from~\eqref{pqudits} and compare it with optimal values for the pPBT for $d\geq 2$ derived in~\cite{Studzinski2017}.
\begin{figure}[!h]
   \centering
        \includegraphics[scale=0.65]{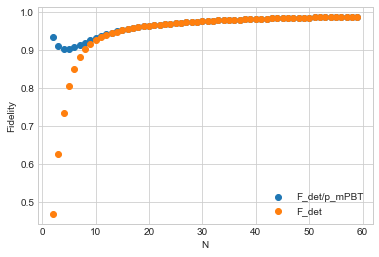}
          \includegraphics[scale=0.65]{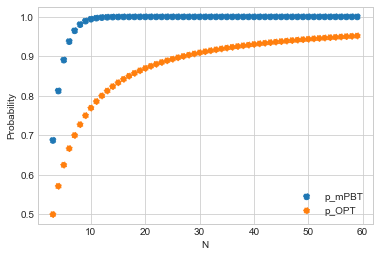}
\caption{Qubit case. The left panel: The orange dots show entanglement fidelity of the teleported state when Alice uses SRM measurements in the non-optimal dPBT. The blue dots represent entanglement fidelity obtained by mPBT protocol from expression~\eqref{fint2}. In the regime of $N>8$ both protocols give similar fidelities, while for $N\leq 8$ the mPBT protocol outperforms dPBT significantly. In the case of mPBT fidelity behaves non-monotonically, i.e. it decreases to the value 0.902 for $N=4$ and then it starts growing, achieving asymptotically unit value. The right panel: The  blue dots depict values of $p_{succ}$ obtained in mPBT scheme performed by SRM measurements from Theorem~\ref{pSRM}. It visibly outperforms the optimal pPBT, which is $p_{succ}=1-\frac{3}{N+3}$ - the orange dots. }
  \label{mPBT vs pPBT}
\end{figure}

\begin{figure}[!h]
 \centering
        \includegraphics[scale=0.65]{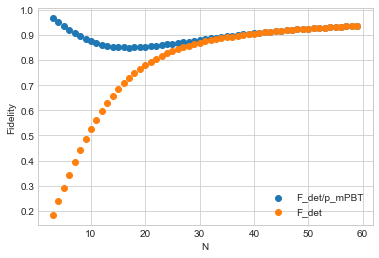}
          \includegraphics[scale=0.65]{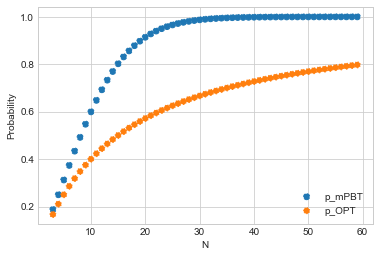}
          \caption{Behaviour of the fidelities (left panel) and probability of success (right panel) for $d=4$. Description of the colors is the same as it is in Figure~\ref{mPBT vs pPBT}. In this case, the mPBT scheme performed by SRM measurements from Theorem~\ref{pSRM} we outperform the optimal qudit pPBT, for which the probability of success is $p_{succ}=1-\frac{d^2-1}{N+d^2-1}$~\cite{Studzinski2017}. When one considers resulting fidelities, we see that the mPBT scheme is more efficient up to $\sim30$ ports when one compares it with the non-optimal dPBT with SRM measurements.} 
          \label{mPBTd vs pPBTd}
\end{figure}

The next theorem shows the best achievable performance in this setting:  Alice optimises both measurements and the resource state using an operation $\widetilde{O}_A$ is given as: 
\begin{equation}
    \label{expOA1}
    \widetilde{O}_A=\sqrt{d^N}\sum_{\mu \vdash N}\frac{v_{\mu}}{\sqrt{d_{\mu}m_{\mu}}}P_{\mu},
    \end{equation}
    where $v_{\mu}\geq 0$ are entries of an eignevector corresponding to a maximal eigenvalue of the teleportation matrix $M_F$ used for computation of entanglement fidelity in optimal PBT~\cite{StuNJP}. 
\begin{theorem}
The probability of success in mPBT, with the optimised resource state is given by
\begin{equation}
p_{succ}=\frac{1}{d}\sum_{\alpha \vdash N-1}\sum_{\mu \in \alpha}\frac{m_{\alpha}}{m_{\mu}}v^2_{\mu},
\end{equation}
where $m_{\alpha}, m_{\mu}$ denote multiplicities of irreps of $S(N-1)$ and $S(N)$ respectively in the Schur-Weyl duality.
\end{theorem}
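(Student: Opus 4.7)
The plan is to reduce the success probability to a single trace of the form $\tr\big[\widetilde{O}_A^{\dagger}\widetilde{O}_A\cdot \mathbf{1}_{\operatorname{supp}(\rho)}\big]/d^{N+1}$ and then to evaluate it by exploiting the block-diagonal structure of both factors in the irreducible decomposition.

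First, I would substitute the rotated square-root measurements $M_i^{AC}=\widetilde{O}_A\Pi_i^{AC}\widetilde{O}_A^{\dagger}$ into the generic probability-of-success formula~\eqref{psucca}. Hermiticity of $\widetilde{O}_A$ is manifest from~\eqref{expOA1}, since the coefficients $v_\mu$ are real and the Young projectors $P_\mu$ are mutually orthogonal. Cyclicity of the trace, together with the SRM completeness relation $\sum_{i=1}^{N}\Pi_i^{AC}=\mathbf{1}_{\operatorname{supp}(\rho)}$ used in~\eqref{psa}, then collapses the $i$-sum and leaves
\begin{equation*}
p_{succ}=\frac{1}{d^{N+1}}\tr\!\Big[\big(\widetilde{O}_A^{\dagger}\widetilde{O}_A\otimes \mathbf{1}_C\big)\,\mathbf{1}_{\operatorname{supp}(\rho)}\Big].
\end{equation*}

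Second, I would decompose each factor into irreducible blocks. From~\eqref{expOA1} and the orthogonality $P_\mu P_{\mu'}=\delta_{\mu\mu'}P_\mu$ one obtains
\begin{equation*}
\widetilde{O}_A^{\dagger}\widetilde{O}_A = d^{N}\sum_{\mu\vdash N}\frac{v_\mu^{2}}{d_\mu m_\mu}P_\mu,
\end{equation*}
while~\eqref{34} supplies $\mathbf{1}_{\operatorname{supp}(\rho)}=\sum_{\alpha\vdash N-1}\sum_{\mu\in\alpha}F_\alpha(\mu)$ with $\tr F_\alpha(\mu)=d_\mu m_\alpha$. The essential structural input is that the irreducible components of the ideal $\mathcal{M}\subset\A$ are labelled jointly by the $S(N)$-frame $\mu\vdash N$ carried by $A_1\cdots A_N$ and by an $S(N-1)$-frame $\alpha$ with $\mu\in\alpha$. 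This forces $(P_{\mu'}\otimes\mathbf{1}_C)F_\alpha(\mu)=\delta_{\mu'\mu}F_\alpha(\mu)$, and consequently $\tr\big[(P_{\mu'}\otimes\mathbf{1}_C)F_\alpha(\mu)\big]=\delta_{\mu'\mu}d_\mu m_\alpha$.

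Combining these ingredients gives
\begin{equation*}
p_{succ}=\frac{1}{d^{N+1}}\sum_{\alpha\vdash N-1}\sum_{\mu\in\alpha}d^{N}\frac{v_\mu^{2}}{d_\mu m_\mu}\cdot d_\mu m_\alpha=\frac{1}{d}\sum_{\alpha\vdash N-1}\sum_{\mu\in\alpha}\frac{m_\alpha}{m_\mu}\,v_\mu^{2},
\end{equation*}
which is the claim. The main obstacle is verifying the compatibility identity $(P_{\mu'}\otimes\mathbf{1}_C)F_\alpha(\mu)=\delta_{\mu'\mu}F_\alpha(\mu)$: although it is the natural refinement of the spectral decomposition~\eqref{rho_spectral} of $\rho$ and is built into the way the irreps of $\A$ are enumerated, a careful proof must appeal to the explicit construction of the projectors $F_\alpha(\mu)$ given in~\cite{stud2020A}. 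Everything else is algebraic bookkeeping exactly parallel to---and no harder than---the unoptimised computation carried out in the proof of Theorem~\ref{pSRM}.
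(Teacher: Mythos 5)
Your proposal is correct and follows essentially the same route as the paper: reduce to $\frac{1}{d^{N+1}}\tr\bigl[\widetilde{O}_A\widetilde{O}_A^{\dagger}\,\mathbf{1}_{\operatorname{supp}(\rho)}\bigr]$ via the SRM completeness relation, expand $\mathbf{1}_{\operatorname{supp}(\rho)}$ into the projectors $F_{\mu}(\alpha)$, and evaluate block by block using the explicit form of $\widetilde{O}_A$. The only cosmetic difference is that you invoke the compatibility identity $(P_{\mu'}\otimes\mathbf{1}_C)F_{\mu}(\alpha)=\delta_{\mu'\mu}F_{\mu}(\alpha)$ where the paper cites the equivalent partial-trace formula $\tr_n F_{\mu}(\alpha)=\frac{m_{\alpha}}{m_{\mu}}P_{\mu}$ (Lemma 10 of~\cite{Studzinski2017}); both rest on the same structural fact about the ideal $\mathcal{M}$.
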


\begin{proof}
To prove expression for the probability of success $p_{succ}$ we use equation~\eqref{psucca} with $\widetilde{O}_A$ from~\eqref{expOA1} and explicit form of measurements presented in~\eqref{srm}:
\begin{equation}
\begin{split}
p_{succ}=\frac{1}{d^{N+1}}\sum_{i=1}^N\tr(\widetilde{O}_A\Pi_i^{AC}\widetilde{O}_A^{\dagger}).
\end{split}
\end{equation}
Since we know that $\sum_i \Pi_i^{AC}=\mathbf{1}_{\operatorname{supp}(\rho)}=\sum_{\alpha}\sum_{\mu\in\alpha}F_{\mu}(\alpha)$ we write:
\begin{equation}
p_{succ}=\frac{1}{d^{N+1}}\sum_{\alpha \vdash N-1}\sum_{\mu\in\alpha}\tr (\widetilde{O}_A\widetilde{O}_A^{\dagger}F_{\mu}(\alpha))=\frac{1}{d^{N+1}}\sum_{\alpha \vdash N-1}\sum_{\mu\in\alpha}\frac{m_{\alpha}}{m_{\mu}}\tr (\widetilde{O}_A\widetilde{O}_A^{\dagger}P_{\mu}).
\end{equation}
In the second equality we use Lemma 10 from~\cite{Studzinski2017} stating that $\tr_n F_{\mu}(\alpha)=\frac{m_{\alpha}}{m_{\mu}}P_{\mu}$, where $P_{\mu}$ is a Young projector on irrep labelled by $\mu \vdash N$. 
Plugging the explicit form of $\widetilde{O}_A$ given in~\eqref{expOA1} into the above, using orthogonality property for Young projectors $P_{\nu}P_{\mu}=\delta_{\nu\mu}P_{\mu}$, and taking into account Observation~\ref{obs} from Appendix~\ref{summary} we find:
\begin{equation}
p_{succ}=\frac{1}{d}\sum_{\alpha \vdash N-1}\sum_{\mu\in\alpha}\frac{m_{\alpha}v_{\mu}^2}{m_{\mu}^2d_{\mu}}\tr(P_{\mu})=\frac{1}{d}\sum_{\alpha \vdash N-1}\sum_{\mu \in \alpha}\frac{m_{\alpha}}{m_{\mu}}v^2_{\mu},
\end{equation}
since $\tr(P_{\mu})=d_{\mu}m_{\mu}$. This finishes the proof.
\end{proof}
We thus evaluated the probability of success $p_{succ}$ by exploiting two types of measurements used by Alice for qudits. Unlike the qubit case of Theorem~\ref{pSRM}, it does not lend itself to a nice closed form.

We now turn to entanglement fidelity of the above protocols. The entanglement fidelity, where one teleports a half of the maximally entangled state, is given by
\begin{equation}
\label{Fvsp}
F(\mathcal{N}_{C\rightarrow \B})\equiv F=\frac{1}{p_{succ}}\tr\left[P^+_{BD}\left(\mathcal{N}_C\otimes \mathbf{1}_D\right)P^+_{CD}\right]=\frac{1}{d^2p_{succ}}\sum_{i=1}^N\tr(\Pi_i^{AC}\sigma_i^{AC}),
\end{equation}
where $\sigma_i^{AC}=(1/d^{N-1})(\mathbf{1}_{\overline{A}_i}\otimes P^+_{A_iC})$, and $\overline{A}_i$ denotes all systems but $i-$th.
We expect to teleport the state faithfully, so we must have $F(\mathcal{N}_{C\rightarrow \B})=1$. Note that the term $\frac{1}{d^2}\sum_{i=1}^N\tr(\Pi_i^{AC}\sigma_i^{AC})$ is the entanglement fidelity $F_{det}$ of the respective protocol in deterministic scheme, since we have that $\tr(M_0^{AC}\sigma_i^{AC})=0$. This observation allows us to express entanglement fidelity in the minimal PBT scheme (see equation~\eqref{F_det}) as:
\begin{equation}
\label{fint}
F=\frac{F_{det}}{p_{succ}}.
\end{equation}
Expressions for $F_{det}$ in (non-)optimal PBT are known and presented respectively in Theorem 12 in~\cite{Studzinski2017} and Proposition 32 in~\cite{StuNJP}. For the self-consistence we provide these expressions in Appendix~\ref{summary}, please see expressions~\eqref{F_non} and~\eqref{Fopt} respectively, with corresponding qubit forms in~\eqref{F_nonqubit} and~\eqref{Foptqubit}. In the case of qubits the probability of success in mPBT is given through expression~\eqref{pqubits} from Theorem~\ref{pSRM} and fidelity $F_{det}$ is given through~\eqref{F_nonqubit}, so~\eqref{fint} reads as:
\begin{equation}
\label{fint2}
F=\frac{\frac{1}{2^{N+3}}\sum_{k=0}^N\left(\frac{N-2k-1}{\sqrt{k+1}}+\frac{N-2k+1}{\sqrt{N-k+1}}\right)^2\binom{N}{k}}{1-\frac{N+2}{2^{N+1}}}.
\end{equation}
In the left panel of figure~\ref{mPBT vs pPBT} we compare the actual fidelity of mPBT from~\eqref{fint}  with the fidelity of the teleported state in the non-optimal PBT scheme~\eqref{fint2}, performed by SRM measurements.

Figure~\ref{PBT_landscape} describes the known landscape of admissible protocols and their interrelation when parties exploit maximally entangled resource state. A similar plot can be make for the optimised port-based teleportation schemes, the only thing which would change is the scaling in the number of ports $N$, but the general shape stays the same.
\begin{figure}[h]
\centering
\includegraphics[scale=1.41]{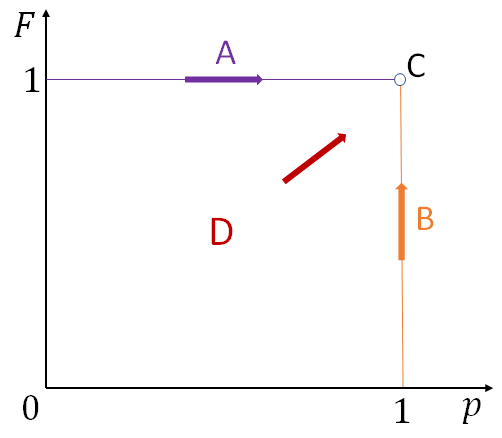}
\caption{Four regions for port-based teleportation protocols; $N$ is the number of ports. Each point defined by a pair of coordinates that corresponds to a protocol with probability of success and fidelity $(F, p)$. {\bf A} (purple): the family of pPBT protocols with $(1,1-O(1/\sqrt{N}))$; {\bf B} (orange):the family of dPBT protocols with $(1-O(1/N),1)$; {\bf C} (black): impossibility region for port-based teleportation protocols with finite resources~\cite{ishizaka2008asymptotic}; {\bf D} (red): new port-based teleportation protocol with minimal assumptions and scaling $(\left[1-O(1/N)\right]/p_{succ}, p_{succ})$, where $p_{succ}=1-(N+2)/2^{N+1}$. The arrows indicates asymptotic convergence of considered protocols.}
\label{PBT_landscape}
\end{figure}

\section{Converting PBT protocols}\label{converting}
Different types of port-based teleportation require individual mathematical analysis and in this section we discuss the possibility of converting probabilistic version of PBT into the deterministic one. First, we prove a general theorem giving an explicit relation between entanglement fidelity and probability of success in such conversion. Next, we illustrate the theorem by presenting expressions for the entanglement fidelity of teleported state in the case of non-optimal pPBT and its optimal version, where explicit form of Alice's measurements is known. We also argue that the reverse conversion, i.e. converting dPBT to probabilistic one is possible only under certain constraints.

\begin{theorem}
\label{transforming}
Every pPBT scheme, with $N$ ports each of dimension $d$, and the probability of success $p_{succ}$, can be turned into deterministic
with explicit entanglement fidelity of the form:
\begin{equation}
\label{eq:transforming}
F=p_{succ}+\frac{1}{d^2N}\tr\left(M_0^{AC}\rho\right),
\end{equation}
where $M_0^{AC}$ is measurements corresponding to the failure of probabilistic scheme, operator $\rho$ is defined through expression~\eqref{srm}.   
\end{theorem}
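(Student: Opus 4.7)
The natural construction is to take the failure POVM element $M_0^{AC}$ of the probabilistic protocol and redistribute it uniformly among the $N$ port outcomes. Concretely, I would define a new measurement on the sender's side,
\begin{equation*}
M_i^{(\mathrm{det})} := M_i^{AC} + \tfrac{1}{N} M_0^{AC},\qquad i=1,\ldots,N.
\end{equation*}
Since $\sum_{i=0}^N M_i^{AC}=\mathbf{1}_{AC}$ in the original pPBT scheme and each $M_i^{AC}\ge 0$, the new operators are positive and $\sum_{i=1}^N M_i^{(\mathrm{det})}=\mathbf{1}_{AC}$, so this is a valid POVM whose every outcome instructs Bob to pick a port. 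The protocol is therefore deterministic, and the only remaining task is to compute its entanglement fidelity.

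Next I would plug $M_i^{(\mathrm{det})}$ into the deterministic-fidelity formula~\eqref{F_det}, writing $\widetilde{M}_i^{(\mathrm{det})}:= (O_A^{\dagger}\otimes \mathbf{1}_{\widetilde{B}}) M_i^{(\mathrm{det})}(O_A\otimes \mathbf{1}_{\widetilde{B}})=\widetilde{M}_i^{AC}+\tfrac{1}{N}\widetilde{M}_0^{AC}$ and splitting the resulting expression into two pieces:
\begin{equation*}
F=\frac{1}{d^2}\sum_{i=1}^N\tr\bigl(\widetilde{M}_i^{AC}\sigma_i^{AC}\bigr)+\frac{1}{Nd^2}\sum_{i=1}^N\tr\bigl(\widetilde{M}_0^{AC}\sigma_i^{AC}\bigr).
\end{equation*}
For the first piece I would invoke the faithfulness constraint~\eqref{ort} of the probabilistic scheme, which together with~\eqref{Fvsp} forces $\sum_{i=1}^N \tr(\widetilde{M}_i^{AC}\sigma_i^{AC}) = d^2 p_{\mathrm{succ}}$; hence this piece equals exactly $p_{\mathrm{succ}}$. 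For the second piece I would pull the $i$-independent operator $\widetilde{M}_0^{AC}$ out of the sum and use $\rho=\sum_{i=1}^N \sigma_i^{AC}$ from~\eqref{srm} to rewrite it as $\tfrac{1}{Nd^2}\tr(\widetilde{M}_0^{AC}\rho)$, which matches the right-hand side of~\eqref{eq:transforming} (identifying $M_0^{AC}$ with the rotated operator if $\widetilde{O}_A\neq\mathbf{1}_A$).

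The construction is essentially forced by the requirement that the new measurement be a POVM, and the computation is routine once one exploits faithfulness and the defining identity of $\rho$. The only subtle point, and the step I expect to require the most care, is bookkeeping the rotation operator $O_A$ — namely checking that the pPBT identity $\sum_i \tr(\widetilde{M}_i^{AC}\sigma_i^{AC})=d^2 p_{\mathrm{succ}}$ still uses the tilded (rotated) operators while the $M_0^{AC}$ appearing in the final expression is interpreted consistently. Once this notational point is settled, adding up the two contributions yields exactly~\eqref{eq:transforming}, and one also sees for free that the conversion preserves positivity and normalisation of the resource state, so no additional regularity conditions are needed on the input pPBT scheme.
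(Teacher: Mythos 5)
Your proposal is correct and follows essentially the same route as the paper: the identical redistribution $M_i\mapsto M_i+\tfrac{1}{N}M_0$, the same splitting of the fidelity into a $p_{succ}$ term (via the faithfulness constraint~\eqref{ort}) and a $\tfrac{1}{Nd^2}\tr(M_0^{AC}\rho)$ term using $\rho=\sum_i\sigma_i^{AC}$. The only cosmetic difference is that you identify the first piece with $p_{succ}$ directly through~\eqref{Fvsp} with $F=1$, whereas the paper passes through $\tr(M_i^{AC}\sigma_i^{AC})=d^{1-N}\tr(M_i^{AC})$ and~\eqref{psucca}; these are equivalent.
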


\begin{proof}
Every pPBT scheme requires a set of $N+1$ measurements $\{M_0^{AC},M_1^{AC},\ldots,M_N^{AC}\}$, where the effect $M_0^{AC}$ corresponds to the failure of the teleportation process. Additionally, to get $F=1$ in probabilistic scheme, we require that all the measurements for $1\leq i \leq N$ satisfy the relation~\eqref{ort}. To design corresponding deterministic scheme, where teleported state appears on one through $N$ ports on the Bob's side we perform the mapping:
\begin{equation}
\label{POVM_mapping}
\Pi_1^{AC} =M_1^{AC}+\frac{1}{N}M_0^{AC},\quad \Pi_1^{AC} =M_2^{AC}+\frac{1}{N}M_0^{AC}, \quad \ldots \quad ,\Pi_N^{AC} = M_N^{AC}+\frac{1}{N}M_0^{AC}.
\end{equation}
Then the teleportation channel $\mathcal{N}(\psi_C)$ is of the form
\begin{equation}
\mathcal{N}(\psi_C)=\sum_{i=1}^N\tr_{A\bar{B}_iC}\left[ \sqrt{\Pi_i^{AC}}\left(\Psi_{AB}^+\ot \psi_{C} \right)\sqrt{\Pi_{i}^{AC}}^{\dagger}\right]_{B_i\rightarrow \B}.
\end{equation}
Now, if $\psi_{C}$ is a half of maximally entangled state $P^+_{CD}$, the entanglement fidelity $F(\mathcal{N})$ of the channel $\mathcal{N}$ is
\begin{equation}
\label{Fp}
\begin{split}
F(\mathcal{N})&=\tr\left[P^+_{BD}(\mathcal{N}_{C}\otimes \mathbf{1}_B)(P^+_{CD})\right]=\frac{1}{d^2}\sum_{i=1}^N\tr\left(\Pi_i^{AC}\sigma_i^{AC}\right)=\frac{1}{d^2}\sum_{i=1}^N\tr\left[\left(M_i^{AC}+\frac{1}{N}M_0^{AC}\right)\sigma_i^{AC}\right]\\
&=\frac{1}{d^2}\sum_{i=1}^N\tr\left(M_i^{AC}\sigma_i^{AC}\right)+\frac{1}{Nd^2}\tr\left(M_0^{AC}\rho\right),
\end{split}
\end{equation}
because $\rho=\sum_i \sigma_i^{AC}$. Since we start from POVMs for probabilistic scheme, they must satisfy relation~\eqref{ort}, which implies:
\begin{equation}
\tr\left(M_i^{AC}\sigma_i^{AC}\right)=\frac{1}{d^{N-1}}\tr\left(M_i^{AC}\right).
\end{equation}
Plugging the above into~\eqref{Fp} and using~\eqref{psucca} we get the result. 
\end{proof}

The conversion of a mPBT to dPBT protocol can be also viewed differently. Namely, every probabilistic PBT protocol leads to a deterministic one with the fidelity $F=F_{succ}p_{succ}+F_{fail}(1-p_{succ})$, where $F_{succ}=1$, and $F_{fail}=1/d^2$ is entanglement fidelity when one fails with the transmission process. Similar approach of turning every pPBT into dPBT shortly discussed in the paper~\cite{majenz2}. In~\cite{majenz2} authors consider a type of conversion when whenever Alice fails in probabilistic scheme she sends to Bob a random port index. However, our approach is a little bit different from the one presented above, since in Theorem~\ref{transforming} we do not have anymore possibility of the failure - see construction of the corresponding POVMs in eq.~\eqref{POVM_mapping}. Our construction divides POVM $M_0^{AC}$ into $N$ parts and add them to every POVM $M_i^{AC}$ getting pure deterministic scheme.
Summarising, Theorem~\ref{transforming} except the direct formula for the entanglement fidelity of such protocol gives also an algorithm for its construction by the explanation how to construct respective measurements.

From expression~\eqref{eq:transforming} we see that the resulting fidelity depends on two factors -- probability of success in probabilistic scheme and overlap between POVM $M_0^{AC}$ corresponding to failure and the state $\rho$ from~\eqref{eq:measurements}. 
In the general case it is impossible to say much about the overlap, since the operator $M_0^{AC}$ depends heavily on an architecture of a given probabilistic scheme. However, when we stick with a particular probabilistic scheme we can explicitly evaluate the entanglement fidelity $F$ in Theorem~\ref{transforming}. In what follows we illustrate how Theorem~\ref{transforming} works in practice for known pPBT schemes. We shall consider both, non-optimal and optimal pPBT for an arbitrary number of ports and their arbitrary dimension, which have been analysed in~\cite{Studzinski2017}. Our calculations are based on the latter paper. A  short summary of the existing results on pPBT together with some group theoretic methods developed in papers~\cite{Studzinski2017,StuNJP} is provided in Section~\ref{summary}.

\begin{lemma}
\label{npPBT}
The entanglement fidelity $F$ given in Theorem~\ref{transforming} for the non-optimised resource state reads:
\begin{equation}
\label{eq:npPBT}
F=\frac{1}{d^N}\sum_{\alpha \vdash N-1}\frac{m_{\alpha}d_{\alpha}}{\gamma_{\mu^*}(\alpha)}+\frac{1}{d^2}\left(1-\frac{1}{d^N}\sum_{\alpha \vdash N-1}\sum_{\nu \in \alpha}m_{\nu}d_{\alpha}\frac{\gamma_{\nu}(\alpha)}{\gamma_{\mu^*}(\alpha)}\right),
\end{equation}
where the numbers $\gamma_{\nu}(\alpha),\gamma_{\mu^*}(\alpha)$ are given by~\eqref{llambda} and~\eqref{lambdamax} respectively. The symbols $\mu,\nu$ denote  Young diagrams with $N$ boxes obtained from a Young diagram $\alpha$ of $N-1$ boxes by adding a single box. The quantities $m_{\alpha},m_{\nu}$ are multiplicities of irreps of permutation groups $S(N-1), S(N)$ respectively in the Schur-Weyl duality, and $d_{\alpha}$ is dimension of an irrep of the permutation group $S(N-1)$ in the Schur-Weyl duality.
\end{lemma}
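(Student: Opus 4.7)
The plan is to invoke Theorem~\ref{transforming} directly: it reduces the claim to computing $p_{succ}$ and $\tr(M_0^{AC}\rho)$ for the non-optimal pPBT scheme. The bulk of the work is casting the non-optimal POVM into the Young projector basis $\{F_\mu(\alpha)\}$ of the algebra $\A$ and then using orthogonality together with the spectral decomposition~\eqref{rho_spectral} of $\rho$.

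First, I would use~\eqref{ort} and $\widetilde{O}_A=\mathbf{1}_A$ to write the non-optimal POVM as $M_i^{AC}=P^+_{A_iC}\otimes\Theta_{\bar{A}_i}$, with the $S(N-1)$-invariant operator expanded as $\Theta_{\bar{A}_i}=\sum_{\alpha\vdash N-1}f(\alpha)P_\alpha(\bar{A}_i)$ in the Young projectors of $S(N-1)$. The key structural identity I would establish is
\begin{equation*}
\sum_{i=1}^N P^+_{A_iC}\otimes P_\alpha(\bar{A}_i)=\frac{1}{d}\sum_{\mu\in\alpha}\gamma_\mu(\alpha)\,F_\mu(\alpha),
\end{equation*}
which follows by summing over $\alpha$, using $\sum_\alpha P_\alpha(\bar{A}_i)=\mathbf{1}_{\bar{A}_i}$, and matching the result with $d^{N-1}\rho$ through~\eqref{rho_spectral}. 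This immediately diagonalises $\sum_i M_i^{AC}$ in the $\{F_\mu(\alpha)\}$ basis with eigenvalues proportional to $f(\alpha)\gamma_\mu(\alpha)$. The non-optimal prescription then corresponds to the maximal choice $f(\alpha)\propto 1/\gamma_{\mu^*}(\alpha)$ compatible with $\sum_i M_i^{AC}\le\mathbf{1}_{AC}$, where $\gamma_{\mu^*}(\alpha)$ is defined in~\eqref{lambdamax}.

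Next, I would substitute this expansion into~\eqref{psucca} and compute the trace using $\tr F_\mu(\alpha)=m_\alpha d_\mu$, the relation $m_\alpha d_\mu\gamma_\mu(\alpha)=Nm_\mu d_\alpha$ read off from~\eqref{llambda}, and the Pieri-type branching identity $\sum_{\mu\in\alpha}m_\mu=d\,m_\alpha$ coming from tensoring a $GL_d$-irrep with the defining representation. This collects into the first summand of~\eqref{eq:npPBT}. For the second summand I would split $\tr(M_0^{AC}\rho)=\tr\rho-\sum_i\tr(M_i^{AC}\rho)$, use $\tr\rho=\sum_i\tr\sigma_i^{AC}=N$, and evaluate the cross term by pairing the Young-projector expansions of $\sum_i M_i^{AC}$ and $\rho$, exploiting orthogonality $F_\mu(\alpha)F_\nu(\beta)=\delta_{\alpha\beta}\delta_{\mu\nu}F_\mu(\alpha)$ and reapplying~\eqref{llambda}. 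Dividing by $Nd^2$ and recombining with $p_{succ}$ produces exactly~\eqref{eq:npPBT}.

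The main obstacle is the structural identity for $\sum_i P^+_{A_iC}\otimes P_\alpha(\bar{A}_i)$ in the $\{F_\mu(\alpha)\}$ basis: this is where the representation theory of the algebra $\A$ developed in~\cite{Studzinski2017,stud2020A} enters essentially, since without this recasting one cannot reduce the traces against $\rho$ to a diagonal sum. Once that identity is in hand, the remainder is careful bookkeeping with the multiplicities and dimensions $m_\alpha,d_\alpha,m_\mu,d_\mu$ and the branching rules for $S(N-1)\subset S(N)$ and $GL_d$.
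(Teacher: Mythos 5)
Your route is essentially the paper's: invoke Theorem~\ref{transforming} so that the claim reduces to computing $p_{succ}$ and $\tr(M_0^{AC}\rho)$, then evaluate the latter in the basis of irreducible projectors $F_{\mu}(\alpha)$ using the spectral decomposition~\eqref{rho_spectral}, orthogonality, the trace rule $\tr F_{\mu}(\alpha)=m_{\alpha}d_{\mu}$, and the relation $\gamma_{\mu}(\alpha)m_{\alpha}d_{\mu}=Nm_{\mu}d_{\alpha}$ from~\eqref{llambda}. The only organizational difference is that the paper uses the covariance~\eqref{mes_cov} to reduce $\sum_i\tr(M_i^{AC}\rho)$ to $N\tr(M_N^{AC}\rho)$ and then works with the single explicit element~\eqref{nonoptmeasurements}, pushing $V'$ through via Theorem 1, Fact 13 and Corollary 10 of the cited works, whereas you diagonalise the whole sum $\sum_i M_i^{AC}$ at once. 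Your evaluation of the cross term does land on $\frac{1}{d^2}\bigl(1-\frac{1}{d^N}\sum_{\alpha}\sum_{\nu\in\alpha}m_{\nu}d_{\alpha}\gamma_{\nu}(\alpha)/\gamma_{\mu^*}(\alpha)\bigr)$, so the endpoint is correct.

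The one step that does not hold up as written is the justification of the structural identity $\sum_i P^+_{A_iC}\otimes P_{\alpha}(\overline{A}_i)=\frac{1}{d}\sum_{\mu\in\alpha}\gamma_{\mu}(\alpha)F_{\mu}(\alpha)$. Summing over $\alpha$ and matching against $d^{N-1}\rho$ only verifies the aggregate identity; it cannot isolate the individual $\alpha$-components, and here it is not even a priori clear that the left-hand operators for different $\alpha$ are mutually orthogonal, since $P_{\alpha}(\overline{A}_i)$ acts on a different subset of systems for each $i$ (the $N$ embeddings of $S(N-1)$ into $S(N)$ differ). The per-$\alpha$ identity is true, but it is exactly the nontrivial structure-theoretic input: one needs the characterization of $F_{\mu}(\alpha)$ in terms of $P_{\alpha}V'P_{\mu}$ together with $P_{\alpha}P_{\beta}=\delta_{\alpha\beta}P_{\alpha}$, i.e.\ Theorem 1 and Fact 13 of~\cite{Studzinski2017}, which is precisely what the paper invokes after the covariance reduction. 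Replace the ``sum and match'' argument by that citation (or by the covariance reduction to $M_N^{AC}$) and your proof closes. One bookkeeping remark: with the definition~\eqref{lambdamax}, your first summand evaluates to $\frac{N}{d^N}\sum_{\alpha}m_{\alpha}d_{\alpha}/\gamma_{\mu^*}(\alpha)$, which agrees with~\eqref{eq:p_succ}; the apparent factor-of-$N$ mismatch with the printed first term of~\eqref{eq:npPBT} stems from the paper's two inconsistent definitions of $\gamma_{\mu^*}(\alpha)$ (Eq.~\eqref{lambdamax} versus the one below Eq.~\eqref{nonoptmeasurements}), not from an error in your computation.
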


\begin{proof}
The first term $p_{succ}$ in expression~\eqref{eq:transforming} is known, see Theorem 3 in~\cite{Studzinski2017}, and equals to
\begin{equation}
p_{succ}=\frac{1}{d^N}\sum_{\alpha \vdash N-1}\frac{m_{\alpha}d_{\alpha}}{\gamma_{\mu^*}(\alpha)}. 
\end{equation}
It remains to evaluate the term $\tr(M_0^{AC}\rho)$. Using the summation rule for POVMs
\begin{equation}
\label{povm_sum}
\sum_{i=0}^N M_i^{AC}=\mathbf{1}_{AC},
\end{equation}
where $\mathbf{1}_{AC}$ is an identity operator on $(\mathbb{C}^d)^{\otimes (N+1)}$, 
 $M_0^{AC}=\mathbf{1}_{AC}-\sum_{i=1}^NM_i^{AC}$. This gives us
\begin{equation}
\begin{split}
\tr(M_0^{AC}\rho)=\tr(\rho)-\sum_{i=1}^N\tr(M_i^{AC}\rho)=N-N\tr(M_N^{AC}\rho).
\end{split}
\end{equation}
The last equality follows from the fact that every operator $\sigma_i^{AC}$ in the sum $\rho=\sum_i\sigma_i^{AC}$ is normalised, and fact that the trace and $\rho$ are invariant under action of elements from the coset $S(N)/S(N-1)$.
Using explicit form of POVM $M_N^{AC}$ given in~\eqref{nonoptmeasurements} in the Appendix~\ref{summary}, and decomposition of the state $\rho$ into irreducible projectors of the algebra $\A$ given in~\eqref{rho_spectral} we get:
\begin{equation}
\begin{split}
\tr(M_N\rho)&=\sum_{\alpha \vdash N-1}\sum_{\beta \vdash N-1}\sum_{\nu \in \beta}\frac{\lambda_{\nu}(\beta)}{\gamma_{\mu^*}(\alpha)}\tr(P_{\alpha}V'F_{\nu}(\beta))=\sum_{\alpha \vdash N-1}\sum_{\beta \vdash N-1}\sum_{\nu \in \beta}\frac{\lambda_{\nu}(\beta)}{\gamma_{\mu^*}(\alpha)}\tr(P_{\alpha}V'P_{\nu}P_{\beta})\\
&=\sum_{\alpha \vdash N-1}\sum_{\nu \in \alpha}\frac{\lambda_{\nu}(\alpha)}{\gamma_{\mu^*}(\alpha)}\tr(P_{\alpha}P_{\nu}V')=\sum_{\alpha \vdash N-1}\sum_{\nu \in \alpha}\frac{\lambda_{\nu}(\alpha)}{\gamma_{\mu^*}(\alpha)}\tr(P_{\alpha}P_{\nu})=\sum_{\alpha \vdash N-1}\sum_{\nu \in \alpha}m_{\nu}d_{\alpha}\frac{\lambda_{\nu}(\alpha)}{\gamma_{\mu^*}(\alpha)}\\
&=\frac{1}{d^N}\sum_{\alpha \vdash N-1}\sum_{\nu \in \alpha}m_{\nu}d_{\alpha}\frac{\gamma_{\nu}(\alpha)}{\gamma_{\mu^*}(\alpha)}.
\end{split}
\end{equation}
The second equality follows from Theorem 1 and Fact 13 in the paper~\cite{Studzinski2017}. The third equality follows from the orthogonality property for Young projectors, saying that $P_{\beta}P_{\alpha}=\delta_{\beta\alpha}P_{\alpha}$. 
The fourth equality follows from the observation that only operator $V'$ acts non-trivially on last $n-$th system, so it can be traced out with respect to it, giving us the identity operator acting on first $N$ systems. 
The fifth equality follows from the fact that only Young projector $P_{\nu}$ acts non-trivially on $N-$th system, so we can apply Corollary 10 from~\cite{stud2020A}, and compute the partial trace. Using orthogonality property for Young projectors an fact that $\tr P_{\alpha}=d_{\alpha}m_{\alpha}$ the result follows.
\end{proof}

It is apparent from Lemma the final result is hard to  analyse analytically. However, every quantity from~\eqref{eq:npPBT} can be computed numerically.  
Next, we prove Lemma~\ref{npPBT} for the optimal pPBT, i.e. when Alice optimises simultaneously over measurements and resource state, and show that the final expression for the entanglement fidelity is in very elegant and compact form.
\begin{lemma}
\label{pPBT}
The entanglement fidelity $F$ given through Theorem~\ref{transforming} for the optimised resource state reads:
\begin{equation}
F=p_{succ}=1-\frac{d^2-1}{N+d^2-1},
\end{equation}
where $p_{succ}$ denotes the probability of success in the optimal pPBT. 
\end{lemma}

\begin{proof}
The main idea of the proof follows the proof of Lemma~\ref{npPBT}. However, here in equation~\eqref{povm_sum} we have to consider POVMs from~\eqref{measurement_opt} with the operation $O_A$ given in the same expression. All the POVMs must satisfy the relation $\sum_{i=0}^NM_i^{AC}=\mathbf{1}_{AC}$,
where $\mathbf{1}_{AC}$ is the identity operator acting on $(\mathbb{C}^d)^{\otimes (N+1)}$. Thus we write $\tr(M_0^{AC}\rho)$ as in the proof of Lemma~\ref{npPBT}:
\begin{equation}
\label{cos}
\tr(M_0^{AC}\rho)=\tr(\rho)-N\tr(M_N^{AC}\rho).
\end{equation}
The first term in the above expression is $\tr(\rho)=N$.

We now compute the second term in~\eqref{cos} using explicit form of POVM $M_N^{AC}$ from~\eqref{measurement_opt}:
\begin{equation}
\label{p2}
\begin{split}
\tr(M_N^{AC}\rho)&=\frac{1}{d}\sum_{\alpha \vdash N-1}u_{\alpha}\sum_{\beta \vdash N-1}\sum_{\mu \in \alpha}\lambda_{\mu}(\beta)\tr(P_{\alpha}V'F_{\mu}(\beta))=\frac{1}{d}\sum_{\alpha \vdash N-1}u_{\alpha}\sum_{\beta \vdash N-1}\sum_{\mu \in \alpha}\lambda_{\mu}(\beta)\tr(V'P_{\alpha}P_{\beta}P_{\mu})\\
&=\frac{1}{d}\sum_{\alpha \vdash N-1}u_{\alpha}\sum_{\mu \in \alpha}\lambda_{\mu}(\alpha)\tr(P_{\alpha}P_{\mu})=\frac{1}{d}\sum_{\alpha \vdash N-1}u_{\alpha}\sum_{\mu \in \alpha}\lambda_{\mu}(\alpha)m_{\mu}d_{\alpha}=\sum_{\alpha \vdash N-1}\sum_{\mu\in \alpha}\frac{d_{\alpha}}{d_{\mu}}\frac{m_{\mu}^2}{\sum_{\nu}m_{\nu}^2}\\
&=\sum_{\mu \vdash N}\left(\sum_{\alpha\in\mu}d_{\alpha}\right)\frac{1}{d_{\mu}}\frac{m_{\mu}^2}{\sum_{\nu}m_{\nu}^2}=1.
\end{split}
\end{equation}
All the steps except the last equality in~\eqref{p2} follow the proof of Lemma~\ref{npPBT}. First, we use the observation that $\sum_{\alpha\in\mu}d_{\alpha}=d_{\mu}$. Then we apply Proposition 25 from~\cite{Studzinski2017}.
Finally, combining expressions $\tr(\rho)=N$ and~\eqref{p2} we conclude that $\tr(M_0^{AC}\rho)=0$. Next, using expression for probability of success $p_{succ}$ in this variant of pPBT, which is $p_{succ}=1-\frac{d^2-1}{N+d^2-1}$ due to Theorem 4 in~\cite{Studzinski2017} we get the result.
\end{proof}

The above result implies that we cannot derive a deterministic scheme from probabilistic variant with entanglement fidelity scaling better than $1-\mathcal{O}(1/N)$ in the number of ports $N$. It follows from the fact that the probability of success in optimal pPBT scales as $1-\mathcal{O}(1/N)$, see~\cite{Studzinski2017}. This is due to the fact that in every probabilistic scheme we have to add an additional constraint: to ensure that the teleported state is transmitted faithfully, we have to ensure that all measurements corresponding to the probability of success satisfy~\eqref{ort}. It means that designing, say, the optimal probabilistic scheme one does not optimise over all possible space of POVMs, but over their proper subset. This restriction is one of the reasons responsible for different scaling in (non-)optimal probabilistic and deterministic protocols respectively. This implies that one cannot turn an arbitrary deterministic protocol into probabilistic one, with better scaling than $1-\mathcal{O}(1/N)$, since measurements of such protocol do not satisfy requirement~\eqref{ort}. In particular, we cannot turn optimal dPBT discussed in~\cite{ishizaka_quantum_2009,StuNJP} into probabilistic scheme with $F=1$ for finite $N$.
The only way for such conversion to be feasible is when one defaults to `interpolated' protocols, described in Section~\ref{intrpolated_prot}, where neither entanglement fidelity $F$ nor probability of success $p_{succ}$ equal to one for the finite number of ports $N$.

The conversion between PBT protocols is illustrated below on Figure~\ref{sec:fids}.

\begin{figure}[h]
\includegraphics[scale=0.51]{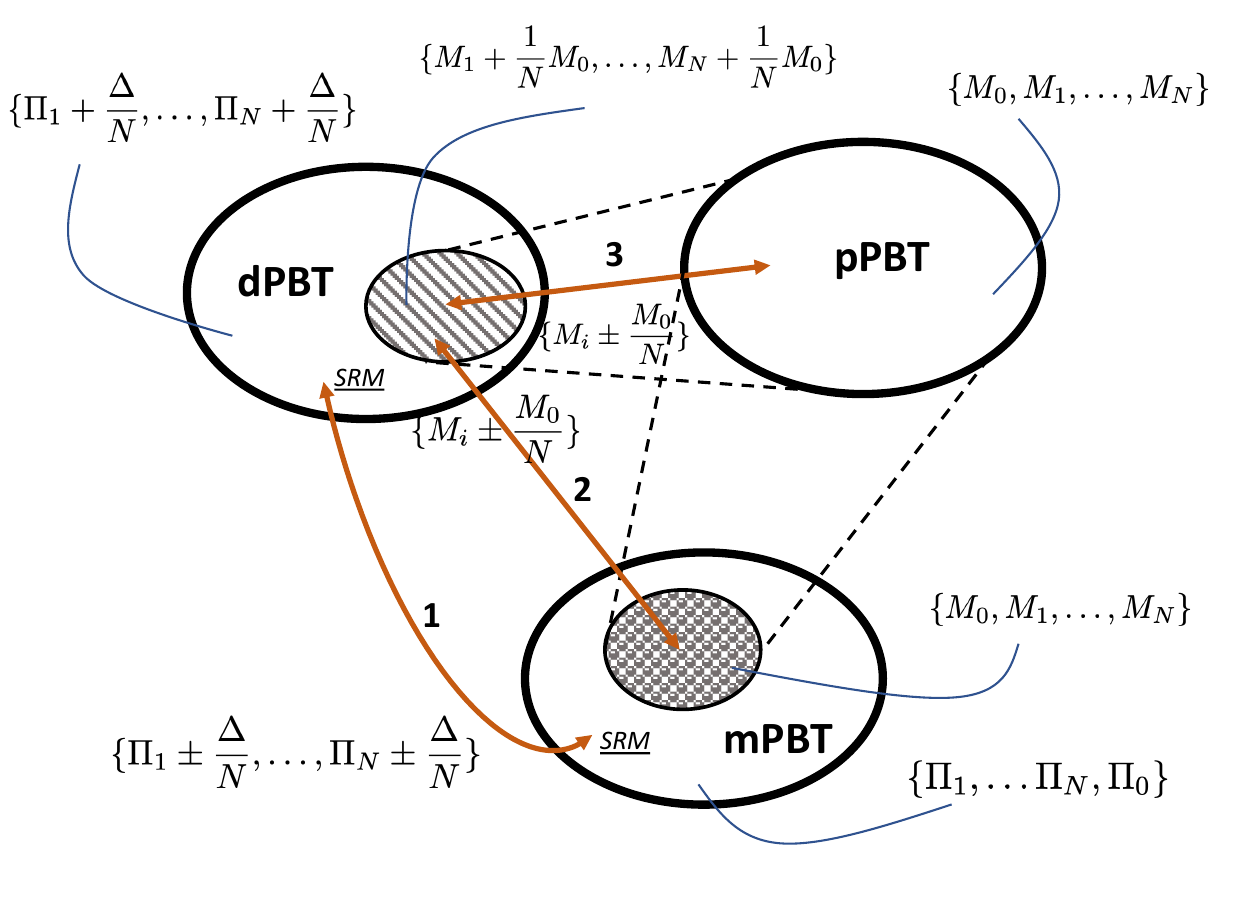}
\caption{
Some of the measurement operators for mPBT are fixed to be $ 1\leq i\leq N \quad \Pi_i^{AC}=\rho^{-1/2}\sigma_i^{AC}\rho^{-1/2}\quad \text{where}\quad \rho=\sum_{i=1}^N\sigma_i^{AC}$, with $\sigma_i^{AC}$ defined in Eq.~\eqref{eq:signals}. In general, there may exist other measurements that yield PBT protocols that satisfy requirements of 
Section~\ref{minimalrecs}.
Every mPBT protocol from the non-shaded region can be converted into the protocol in the non-shaded dPBT by adding $\Delta/N$ to each of the POVM elements, where $\Delta=\mathbf{1}-\sum_{i=1}^N\Pi_i$  (arrow 1). Any known pPBT protocol can be converted to a (shaded) subset of known dPBT protocols by taking all the POVM elements $M_1,\ldots M_N$ that correspond to successful teleportation (omitting $M_0$) and transform each of them as follows: $M_i = M_i + M_0/N$. This operation is invertible (arrow 3). Lastly, any mPBT protocol from the shaded area can be converted to shaded subset of dPBT protocols.
}
\end{figure}
\section{PBT resource state comparison}
\label{sec:fids}

Despite the fact that all PBT protocols share operational similarities, their underlying resource states are rather different. In this section we discuss properties of the resource states in all PBT protocols by evaluating their closeness. 
We show that by starting from maximally entangled states, the the optimisation operator $O_A$ applied by Alice has a large effect on their mutual distances. To quantify the distance, we will use the square-root fidelity $\sqrt{F}$~\cite{Nielsen-Chuang}. See Appendix~\ref{SRfid} for details.

Let us denote the resource states in optimal probabilistic and optimal deterministic scheme by $|\Psi\>_{AB},|\widetilde{\Psi}\>_{AB}$ respectively. We distinguish optimization operators on Alice's side from Eq.~\eqref{resource} by writing $O_A,\widetilde{O}_A$ respectively. 
\begin{lemma}
\label{Fbetween}
The square fidelity $\sqrt{F}$ between the resource states in optimal pPBT $|\Psi\>_{AB}$ and optimal dPBT $|\widetilde{\Psi}\>_{AB}$, each with $N$ ports of dimension $d$ equals to:
\begin{equation}
\label{eq:Fbetween}
\sqrt{F}\left(|\Psi\>_{AB},|\widetilde{\Psi}\>_{AB}\right)=\sum_{\mu \vdash N}\frac{v_{\mu}m_{\mu}}{\sqrt{\sum_{\nu \vdash N}m_{\nu}^2}}, 
\end{equation}
where $m_{\mu},m_{\nu}$ are multiplicities of irreps of $S(N)$ in the Schur-Weyl duality.  The numbers $v_{\mu}\geq 0$ are entries of an eignevector corresponding to a maximal eigenvalue of the teleportation matrix $M_F$.
\end{lemma}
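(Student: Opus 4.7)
The plan is to reduce the fidelity to an inner product of pure states and then exploit the spectral form of the optimising operators in the Young-projector basis. Since both $|\Phi\>_{AB}$ and $|\widetilde{\Phi}\>_{AB}$ are pure, the square-root fidelity is just the modulus of their overlap, so I would start from
\begin{equation*}
\sqrt{F}\left(|\Phi\>_{AB},|\widetilde{\Phi}\>_{AB}\right) = \left|\<\Phi|\widetilde{\Phi}\>\right| = \left|\<\Psi^+|_{AB}\left(O_A^{\dagger}\widetilde{O}_A\otimes \mathbf{1}_B\right)|\Psi^+\>_{AB}\right|.
\end{equation*}
Using the elementary identity $\<\Psi^+|(X\otimes\mathbf{1})|\Psi^+\>=\frac{1}{d^N}\tr(X)$ for the maximally entangled state on $(\mathbb{C}^d)^{\otimes N}\otimes(\mathbb{C}^d)^{\otimes N}$, the overlap collapses to a single trace,
\begin{equation*}
\sqrt{F}\left(|\Phi\>_{AB},|\widetilde{\Phi}\>_{AB}\right) = \frac{1}{d^N}\left|\tr\left(O_A^{\dagger}\widetilde{O}_A\right)\right|.
\end{equation*}

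Next I would substitute the explicit spectral forms of both optimising operators. For $\widetilde{O}_A$ this is already given in Eq.~\eqref{expOA1}, while $O_A$ for the optimal pPBT has an analogous expansion in Young projectors that is recalled in Appendix~\ref{summary} (with coefficients proportional to $m_\mu/\sqrt{d_\mu m_\mu}$ and normalisation fixed by $\tr(O_A^\dagger O_A)=d^N$, which one verifies yields the overall factor $\sqrt{d^N/\sum_{\nu}m_\nu^2}$). Because both operators are linear combinations of the Young projectors $P_\mu$ with $\mu\vdash N$, the product $O_A^{\dagger}\widetilde{O}_A$ collapses by the orthogonality relation $P_\mu P_\nu = \delta_{\mu\nu}P_\mu$ to a single sum over $\mu$. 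Taking the trace and using $\tr(P_\mu)=d_\mu m_\mu$, the dimension factors $d_\mu$ cancel, the prefactor $\sqrt{d^N}$ from $\widetilde{O}_A$ cancels one power of $d^N$, and the normalisation of $O_A$ produces the denominator $\sqrt{\sum_\nu m_\nu^2}$, delivering the claimed expression~\eqref{eq:Fbetween}. Since the entries $v_\mu\ge 0$ of the maximal eigenvector are nonnegative, the modulus is redundant.

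The main technical obstacle will be faithfully importing the exact spectral form of the pPBT optimiser $O_A$ from the referenced literature and verifying that its normalisation is consistent with $\tr(O_A^\dagger O_A)=d^N$; a sign error or a misplaced $\sqrt{d_\mu m_\mu}$ at that step would propagate into the wrong denominator. Once the two spectral decompositions are correctly aligned, the computation reduces to Young-projector bookkeeping and the conclusion follows directly.
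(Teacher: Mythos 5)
Your proposal is correct and follows essentially the same route as the paper: reduce the square-root fidelity of the two pure resource states to $\frac{1}{d^N}\left|\tr\left(\widetilde{O}_A O_A\right)\right|$ via the maximally entangled state identity, insert the Young-projector expansions of $O_A$ and $\widetilde{O}_A$ from the appendix, and use $P_\mu P_\nu=\delta_{\mu\nu}P_\mu$ with $\tr(P_\mu)=d_\mu m_\mu$ and $g(N)=1/\sum_\nu m_\nu^2$ to obtain~\eqref{eq:Fbetween}. The only cosmetic difference is that you carry a dagger on $O_A$, which the paper drops by appealing to its Observation that the optimising operators are Hermitian.
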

\begin{proof}
The general form of the optimal resource state in pPBT reads:
\be
\label{resource}
|\Psi\>_{AB}=(O_A \otimes \mathbf{1}_B)|\Psi^+\>_{AB}=(O_A \otimes \mathbf{1}_B)|\psi^+\>_{A_1B_1}\otimes |\psi^+\>_{A_2B_2}\otimes \cdots \otimes |\psi^+\>_{A_NB_N},
\ee
where $O_A$ is taken from~\eqref{measurement_opt}, with normalisation constraint $\tr(O_A^{\dagger}O_A)=d^N$. The same holds for the optimal dPBT, but instead of $O_A$ we use $\widetilde{O}_A$ given in~\eqref{expOA}. The square-root fidelity $\sqrt{F}$ between the states $|\Psi\>_{AB},|\widetilde{\Psi}\>_{AB}$:
\begin{equation}
\label{eq:FF}
\begin{split}
\sqrt{F}\left(|\Psi\>_{AB},|\widetilde{\Psi}\>_{AB}\right)=\left|\<\Psi|\widetilde{\Psi}\>_{AB}\right|=\left|\tr\left(|\widetilde{\Psi}\>\<\Psi|_{AB}\right)\right|=\frac{1}{d^N}\tr\left(\widetilde{O}_AO_A\right).
\end{split}
\end{equation}
Plugging the explicit forms of the operators $\widetilde{O}_A,O_A$ from~\eqref{expOA} and \eqref{measurement_opt} respectively we arrive to
\begin{equation}
\sqrt{F}\left(|\Psi\>_{AB},|\widetilde{\Psi}\>_{AB}\right)=\sum_{\mu,\nu \vdash N}\sqrt{\frac{g(N)m_{\mu}}{d_{\mu}}}    \frac{v_{\nu}}{\sqrt{d_{\nu}m_{\nu}}} \tr\left(P_{\mu}P_{\nu}\right).
\end{equation}
Taking into account orthogonality relation for Young projectors $P_{\mu}P_{\nu}=P_{\mu}\delta_{\mu\nu}$ with trace property $\tr P_{\mu}=d_{\mu}m_{\mu}$ and plugging the explicit form of the function $g(N)$ from~\eqref{measurement_opt_coeff} completes the proof.
\end{proof}

The numbers $v_{\mu}\geq 0$ in expression~\eqref{eq:Fbetween} are not known to admit closed form, since one does not know analytical expressions for eigenvectors of the teleportation matrix for $d>2$ and $d<N$ (see~\cite{StuNJP}). However, when $d=2$ the matrix $M_F$ is tri-diagonal and we know analytical expressions for its eigenvalues and eigenvectors due to~\cite{Losonczi1992}. In what follows, we will invoke results from~\cite{ishizaka_quantum_2009}, where the form of the operators $\widetilde{O}_A, O_A$ was evaluated in terms of the spin angular momentum of the $N-$spin system. In this representation the operators are proportional to identity on subspaces with fixed quantum number $j$, which runs from $j_{\min}=0(1/2)$ when $N$ is even (odd):
\begin{equation}
\label{eq:qubitOA}
\widetilde{O}_A=\sum_{j=j_{\min}}^{N/2}\sqrt{\gamma(j)}\mathds{1}(j),\qquad O_A=\sum_{j=j_{\min}}^{N/2}\sqrt{\zeta(j)}\mathds{1}(j),
\end{equation}
for known positive numbers $\gamma(j),\zeta(j)$ which we describe later. The above form of the optimal Alice's operations allows us to formulate the qubit version of Lemma~\ref{Fbetween}:
\begin{lemma}
\label{Fbetween2}
The square fidelity $\sqrt{F}$ between the resource states $|\Psi\>_{AB}$ and $|\widetilde{\Psi}\>_{AB}$ in optimal qubit pPBT and optimal dPBT respectively, each with $N$ ports equals to:
\begin{equation}
\label{eq:Fbetween}
\sqrt{F}\left(|\Psi\>_{AB},|\widetilde{\Psi}\>_{AB}\right)=\frac{2}{N+2}\sqrt{\frac{6}{(N+1)(N+3)}}\sum_{j=j_{\min}}^{N/2}(2j+1)\operatorname{sin}\left(\frac{\pi (2j+1)}{N+2}\right),
\end{equation}
where $j$ runs from $j_{\min}=0(1/2)$ when $N$ is even (odd). The fidelity $\sqrt{F}\left(|\Psi\>_{AB},|\widetilde{\Psi}\>_{AB}\right)$ converges to $\frac{\sqrt{6}}{\pi}\approx 0.778$ when $N\rightarrow \infty$ and takes maximal value $\sqrt{F}\left(|\Psi\>_{AB},|\widetilde{\Psi}\>_{AB}\right)=0.894$ which is attained for $N=2$.
\end{lemma}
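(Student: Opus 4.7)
The strategy is to specialize Lemma~\ref{Fbetween} to $d=2$ and substitute closed-form expressions for the two ingredients it depends on: the multiplicities $m_\mu$ of $S(N)$-irreps in $(\mathbb{C}^2)^{\otimes N}$ and the entries $v_\mu$ of the leading eigenvector of the teleportation matrix $M_F$. By Schur-Weyl duality, irreps $\mu\vdash N$ with $h(\mu)\le 2$ correspond bijectively to angular-momentum labels $j\in\{j_{\min},\ldots,N/2\}$ via $\mu(j)=(N/2+j,N/2-j)$, and the $S(N)$-multiplicity of $\mu(j)$ equals the dimension $2j+1$ of the associated $SU(2)$-irrep. Summing squares of consecutive odd (respectively even) integers then gives
\begin{equation*}
\sum_{\nu\vdash N} m_\nu^2 \;=\; \sum_{j=j_{\min}}^{N/2}(2j+1)^2 \;=\; \frac{(N+1)(N+2)(N+3)}{6},
\end{equation*}
handled separately for $N$ even and $N$ odd; this supplies the denominator in Lemma~\ref{Fbetween}.

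For the numerator one needs $v_\mu$ explicitly. As noted just before the statement, in the qubit case the teleportation matrix $M_F$ is tri-diagonal in the angular-momentum basis, so by the Losonczi formula~\cite{Losonczi1992} for eigenvectors of symmetric tri-diagonal matrices its leading (maximal-eigenvalue) eigenvector has entries proportional to $\sin(\pi(2j+1)/(N+2))$. The proportionality constant is fixed by the normalization of $\widetilde O_A$: starting from~\eqref{expOA1}, using orthogonality of Young projectors and $\tr P_\mu=d_\mu m_\mu$, the constraint $\tr(\widetilde O_A^\dagger\widetilde O_A)=2^N$ reduces to $\sum_\mu v_\mu^2=1$. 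Combined with the trigonometric identity
\begin{equation*}
\sum_{j=j_{\min}}^{N/2}\sin^2\!\left(\frac{\pi(2j+1)}{N+2}\right)=\frac{N+2}{4},
\end{equation*}
proved via $\sin^2 x=(1-\cos 2x)/2$ and a telescoping cosine sum (again split by parity of $N$), one obtains $v_{\mu(j)}=(2/\sqrt{N+2})\sin(\pi(2j+1)/(N+2))$. Substituting into Lemma~\ref{Fbetween} and collecting $(2/\sqrt{N+2})\cdot\sqrt{6/[(N+1)(N+2)(N+3)]}=\tfrac{2}{N+2}\sqrt{6/[(N+1)(N+3)]}$ reproduces the claimed formula.

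For the $N\to\infty$ limit, set $x=(2j+1)/(N+2)$ with spacing $\Delta x=2/(N+2)$; the sum becomes a Riemann approximation to $\tfrac{(N+2)^2}{2}\int_0^1 x\sin(\pi x)\,dx$, and since $\int_0^1 x\sin(\pi x)\,dx=1/\pi$, combining with the prefactor (which scales as $2\sqrt 6/(N+2)^2$ at leading order) yields $\sqrt F\to\sqrt 6/\pi$. The maximum at $N=2$ is a direct evaluation: for $j\in\{0,1\}$ the sum is $\sin(\pi/4)+3\sin(3\pi/4)=2\sqrt 2$, multiplied by $\tfrac12\sqrt{2/5}$ giving $\sqrt F=2/\sqrt 5\approx 0.894$. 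The main technical obstacle is the identification of the sinusoidal eigenvector, which rests on the nontrivial tri-diagonality of $M_F$ in the qubit case (unavailable for $d\ge 3$); the remainder is routine trigonometric book-keeping and normalization tracking.
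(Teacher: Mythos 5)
Your proposal is correct, and it follows the same overall route as the paper: specialize Lemma~\ref{Fbetween} to $d=2$ by identifying two-row diagrams $\mu(j)=(N/2+j,\,N/2-j)$ with angular-momentum labels and substituting $m_{\mu(j)}=2j+1$. The difference lies in how you obtain the two ingredients. The paper simply imports the qubit coefficients $\gamma(j),\zeta(j)$ from Ishizaka--Hiroshima (their Eqs.~(35) and (54), quoted here as~\eqref{eq:coeff}) together with $h(N)$ from~\eqref{h}, and multiplies them out; you instead re-derive $v_{\mu(j)}=\tfrac{2}{\sqrt{N+2}}\sin\!\bigl(\tfrac{\pi(2j+1)}{N+2}\bigr)$ from the tri-diagonality of $M_F$ plus the normalization $\sum_\mu v_\mu^2=1$ and the identity $\sum_j\sin^2\!\bigl(\tfrac{\pi(2j+1)}{N+2}\bigr)=\tfrac{N+2}{4}$, and you compute $\sum_\nu m_\nu^2=\tfrac{(N+1)(N+2)(N+3)}{6}$ directly instead of quoting $h(N)$. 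Both routes rest on the same external fact (the sinusoidal leading eigenvector of the qubit teleportation matrix, which the paper itself attributes to~\cite{Losonczi1992}), so yours is not logically more self-contained there, but it makes the normalization bookkeeping explicit and verifiable. Your treatment of the two numerical claims is actually stronger than the paper's: the paper asserts the limit $\sqrt{6}/\pi$ and the maximum $0.894$ at $N=2$ ``numerically,'' whereas you give a Riemann-sum argument with $\int_0^1 x\sin(\pi x)\,dx=1/\pi$ for the former and an exact evaluation $2/\sqrt{5}$ for the latter. All of your intermediate identities check out for both parities of $N$.
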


\begin{proof}
The proof is similar to that of Lemma~\ref{Fbetween} with the difference we now use qubit versions of the operators $O_A,\widetilde{O}_A$ in spin angular momentum representation given in~\eqref{eq:qubitOA}. The dimension and the multiplicity $d_{\mu}, m_{\mu}$ respectively depend on the quantum number $j$:
\begin{equation}
\label{djmj}
d_j=\frac{(2j+1)N!}{(N/2-j)!(N/2+j+1)!},\qquad m_j=2j+1.
\end{equation}
This together with the trace rule $\tr \mathds{1}(j)=m_jd_j$ allows us to deduce that the square-root fidelity $\sqrt{F}$ is of the form
\begin{equation}
\sqrt{F}\left(|\Psi\>_{AB},|\widetilde{\Psi}\>_{AB}\right)=\frac{1}{2^N}\sum_{j=j_{\min}}^{N/2}\sqrt{\gamma(j)\zeta(j)}d_jm_j.
\end{equation}
Plugging the explicit form of the coefficients $\gamma(j),\zeta(j)$ (expression (35) and (54) in~\cite{ishizaka_quantum_2009})
\begin{equation}
\label{eq:coeff}
\gamma(j)=\frac{2^{N+2}}{(N+2)m_jd_j}\operatorname{sin}^2\left(\frac{\pi m_j}{N+2}\right),\qquad \zeta(j)=\frac{2^Nh(N)m_j}{d_j},
\end{equation}
and observing that the value of the sinus is always positive in the allowed range of $j$, and with expression for $h(N)$ (equation above (54) in~\cite{ishizaka_quantum_2009}), which reads as
\begin{equation}
\label{h}
h(N)=\frac{6}{(N+1)(N+2)(N+3)},
\end{equation}
we obtain the first part of the statement.
From the above expression one can deduce numerically the convergence of the fidelity to the value $0.778$ for $N\rightarrow \infty$, and maximal value of $F\left(|\Psi\>_{AB},|\widetilde{\Psi}\>_{AB}\right)=0.894$ attained for two ports.
\end{proof}

The situation is entirely different when one considers fidelity between the resource states in non-optimal and optimal PBT, taking probabilistic and deterministic scheme for comparison. In both non-optimal pPBT and dPBT the resource state before the optimization has the same form of $N$ copies of maximally entangled state. We start from the following lemma:
\begin{lemma}
\label{ovpPBT}
The square fidelity $\sqrt{F}$ between the resource states in non-optimal pPBT and its optimal version, each with $N$ ports of dimension $d$ equals to:
\begin{equation}
\label{71}
\sqrt{F}\left(|\Psi\>_{AB},|\Psi^+\>_{AB}\right)=\frac{1}{d^N}\sum_{\mu \vdash N}m_{\mu}\sqrt{g(N)d_{\mu}m_{\mu}},
\end{equation}
where $d_{\mu}, m_{\mu}$ are dimension and multiplicity of irreps of $S(N)$ in the Schur-Weyl duality, and $g(N)$ is given in ~\eqref{measurement_opt_coeff}. In the qubit case the expression~\eqref{71} has a form
\begin{equation}
\label{72}
\sqrt{F}\left(|\Psi\>_{AB},|\Psi^+\>_{AB}\right)=\sqrt{\frac{6}{2^N(N+1)(N+2)(N+3)}}\sum_{j=j_{\min}}^{N/2}\frac{(2j+1)^2}{\sqrt{\left(\frac{N}{2}-j\right)!\left(\frac{N}{2}+j+1\right)!}},
\end{equation}
where $j$ runs from $j_{\min}=0(1/2)$ when $N$ is even (odd). The  qubit fidelity from~\eqref{72} converges to 0 with $N\rightarrow \infty$.
\end{lemma}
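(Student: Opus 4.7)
The plan is to reduce the fidelity between the two pure states to a single trace of the optimising operator and then expand it in the Schur--Weyl basis. Concretely, since both $|\Phi^+\>_{AB}$ and $|\Phi\>_{AB}=(O_A\otimes\mathbf 1_B)|\Phi^+\>_{AB}$ are unit vectors (the normalisation $\tr(O_A^\dagger O_A)=d^N$ is exactly what ensures $\||\Phi\>\|=1$), we have
\begin{equation}
\sqrt{F}\bigl(|\Phi\>_{AB},|\Phi^+\>_{AB}\bigr)=\bigl|\<\Phi^+|(O_A\otimes\mathbf 1_B)|\Phi^+\>_{AB}\bigr|.
\end{equation}
I would then apply the standard ricochet identity for the maximally entangled state, $\<\Phi^+|(X\otimes\mathbf 1)|\Phi^+\>=\tr(X)/d^N$, which is exactly the same manipulation used in the proof of Lemma~\ref{Fbetween} (cf.\ equation~\eqref{eq:FF}). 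This immediately reduces the problem to computing $\tr(O_A)/d^N$.

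Next I would insert the spectral form of the optimal pPBT operator, namely $O_A=\sum_{\mu\vdash N}\sqrt{g(N)m_\mu/d_\mu}\,P_\mu$ recalled from \eqref{measurement_opt} in the appendix, and use the Young-projector trace identity $\tr(P_\mu)=d_\mu m_\mu$. This gives
\begin{equation}
\tr(O_A)=\sum_{\mu\vdash N}\sqrt{\frac{g(N)m_\mu}{d_\mu}}\,d_\mu m_\mu=\sum_{\mu\vdash N}m_\mu\sqrt{g(N)d_\mu m_\mu},
\end{equation}
which is precisely \eqref{71} after dividing by $d^N$. Because the $P_\mu$ are mutually orthogonal, no cross terms appear and nothing further is required in the general-$d$ case.

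For the qubit specialisation I would switch to the spin-angular-momentum description, using $O_A=\sum_{j}\sqrt{\zeta(j)}\,\mathds 1(j)$ from \eqref{eq:qubitOA} with $\zeta(j)=2^Nh(N)m_j/d_j$ and $h(N)=6/[(N+1)(N+2)(N+3)]$, together with the explicit formulas \eqref{djmj} for $d_j$ and $m_j$. Substituting these into $\tr(O_A)=\sum_j\sqrt{\zeta(j)}\,m_jd_j$ and simplifying the factorial ratio $m_j d_j=(2j+1)^2 N!/[(N/2-j)!(N/2+j+1)!]$ produces the closed form \eqref{72}, with the prefactor $\sqrt{h(N)/2^N}$ (up to the overall $N!$ bookkeeping that comes out of $\sqrt{m_j d_j}$) collecting into $\sqrt{6/[2^N(N+1)(N+2)(N+3)]}$.

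The only non-routine part is verifying the claimed limit $\sqrt{F}\to 0$ as $N\to\infty$. The plan here is an asymptotic analysis of \eqref{72} via Stirling: the binomial-type ratio $N!/[(N/2-j)!(N/2+j+1)!]$ is concentrated around $j\sim\sqrt{N}/2$ and behaves like $2^N\sqrt{2/(\pi N)}\,e^{-2j^2/N}/(N/2+j+1)$; combined with the polynomial prefactor $(2j+1)^2$ and the $1/N^{3/2}$ decay coming from $\sqrt{h(N)}$, a Gaussian-integral/Riemann-sum estimate gives the sum $\sum_j \sqrt{\zeta(j)}\,m_jd_j$ as $O(2^N/N^{3/4})$, so that $\sqrt{F}=O(N^{-3/4})\to 0$. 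I expect this Stirling estimate — tracking the competing exponential $2^N$, the polynomial correction, and the width $\sim\sqrt{N}$ of the support — to be the main technical obstacle; everything earlier is algebraic bookkeeping using the irreducible decomposition of $\mathcal A'_n(d)$ already established in the preliminaries.
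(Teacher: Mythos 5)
Your main computation follows the paper's proof essentially verbatim: reduce $\sqrt{F}$ to $\tr(O_A)/d^N$ via the ricochet identity on the maximally entangled state, insert the spectral form of $O_A$, use $\tr(P_\mu)=d_\mu m_\mu$ with orthogonality of the Young projectors, and specialise to qubits through $\zeta(j)$, $h(N)$ and \eqref{djmj}. One bookkeeping point to flag rather than paper over: \eqref{measurement_opt} defines $O_A=\sqrt{d^N}\sum_\mu\sqrt{g(N)m_\mu/d_\mu}\,P_\mu$ (the $\sqrt{d^N}$ is forced by $\tr(O_A^\dagger O_A)=d^N$, i.e.\ by the very normalisation you invoke), so $\tr(O_A)/d^N$ comes out with prefactor $1/\sqrt{d^N}$, which is what the paper's own displayed chain of equalities ends with; you recover the $1/d^N$ of \eqref{71} only because you silently dropped the $\sqrt{d^N}$ from $O_A$. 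The lemma statement and the computation cannot both be right, and a referee-quality proof should say which.

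The convergence claim is where you genuinely diverge from the paper, and your plan there is aimed at the wrong quantity. In \eqref{72} the summand is $(2j+1)^2/\sqrt{(N/2-j)!(N/2+j+1)!}$: since the two factorials have fixed sum of arguments $N+1$, their product is at least of order $((N/2)!)^2$ for every admissible $j$, so each term is at most $(N+1)^2/(N/2)!$ and the whole sum is bounded by $\frac{(N/2)(N+1)^2}{(N/2)!}\rightarrow 0$. This one-line bound is exactly what the paper does; no Stirling, no Gaussian concentration, no competition with $2^N$. Your analysis is built around the full ratio $N!/[(N/2-j)!(N/2+j+1)!]\sim 2^N$, which does not appear inside the sum of \eqref{72} (only the reciprocal square root of the two factorials does), so the ``main technical obstacle'' you anticipate is absent for the formula as stated --- and in any case you only announce the estimate and a claimed $O(N^{-3/4})$ rate without carrying it out, which leaves that step unproved. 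Your parenthetical about the ``overall $N!$ bookkeeping'' does touch something real: carrying the normalisation of $O_A$ consistently places a $\sqrt{N!}$ in the prefactor of \eqref{72}, in which case the decay is only polynomial and a Stirling-type argument would genuinely be needed. But as a proof of the lemma as written, the elementary factorial bound is both sufficient and necessary to state explicitly, and it is the route the paper takes.
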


\begin{proof}
We start from proving our statement for an arbitrary dimension $d$ of the port. Using explicit form of $O_A$ in optimal pPBT given in~\eqref{measurement_opt} w have:
\begin{equation}
\label{eq:FF}
\begin{split}
\sqrt{F}\left(|\Psi\>_{AB},|\Psi^+\>_{AB}\right)=\frac{1}{d^N}\tr\left(O_A\right)=\frac{1}{\sqrt{d^N}}\sum_{\mu \vdash N}\sqrt{\frac{g(N)m_{\mu}}{d_{\mu}}}\tr(P_{\mu})=\frac{1}{\sqrt{d^N}}\sum_{\mu \vdash N}m_{\mu}\sqrt{g(N)d_{\mu}m_{\mu}},
\end{split}
\end{equation}
since $\tr(P_{\mu})=d_{\mu}m_{\mu}$. To get expression for qubits we use the second expressions from equations~\eqref{eq:qubitOA} and~\eqref{eq:coeff}, and then the explicit form of the function $h(N)$ in~\eqref{h}, together with equations for $d_j,m_j$ in terms of quantum number $j$ from~\eqref{djmj}. To prove convergence in the qubit case, note that the factor in front of the sum over $j$ clearly goes to 0 with $N\rightarrow \infty$. The second factor can be bounded from the above as follows
\begin{equation}
\sum_{j=j_{\min}}^{N/2}\frac{(2j+1)^2}{\sqrt{\left(\frac{N}{2}-j\right)!\left(\frac{N}{2}+j+1\right)!}}\leq \frac{\frac{N}{2}(N+1)^2}{\left(\frac{N}{2}\right)!}\xrightarrow{N\rightarrow \infty}0.
\end{equation}
\end{proof}

For the completeness of our results, we include the corresponding lemma from~\cite{deg} and giving the value of fidelity between resource states in non- and optimal dPBT.
\begin{lemma}
\label{l:FPBT}
The square fidelity between the resource state in non-optimal and optimal dPBT with $N$ ports, each of dimension $d$ is given as:
\begin{equation}
\label{FPBT}
\sqrt{F}(|\Psi^+\>_{AB},|\Psi\>_{AB})=\frac{1}{\sqrt{d^N}}\sum_{\mu \vdash N} v_{\mu}\sqrt{d_{\mu}m_{\mu}},
\end{equation}
where $v_{\mu}$ are entries of an eigenvector corresponding to a maximal eigenvalue of the teleportation matrix $M_F$, $m_{\mu},d_{\mu}$ denote multiplicity and dimension of irreps of $S(N)$ in the Schur-Weyl duality, and $P_{\mu}$ is a respective Young projector. 
For qubits, the fidelity between the resources states is of the form:
\begin{equation}
\label{eq:l:FPBT}
\sqrt{F}(|\Psi^+\>_{AB},|\Psi\>_{AB})=\sqrt{\frac{N!}{2^{N-2}(N+2)}}\sum_{j=j_{\min}}^{N/2}\frac{(2j+1)\operatorname{sin}\frac{\pi(2j+1)}{N+2}}{\sqrt{(\frac{N}{2}-j)!(\frac{N}{2}+j+1)!}},
\end{equation}
where $j_{\min}=0 (1/2)$ when $N$ is even (odd). 
\end{lemma}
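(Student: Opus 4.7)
The plan is to mirror the computation already carried out in Lemmas~\ref{Fbetween} and~\ref{ovpPBT}. Since the non-optimal dPBT resource state is $|\Phi^+\>_{AB}=\bigotimes_{i=1}^N|\phi^+\>_{A_iB_i}$ and the optimal dPBT state has the form $|\Phi\>_{AB}=(\widetilde{O}_A\otimes\mathbf{1}_B)|\Phi^+\>_{AB}$, I would begin by using the standard maximally-entangled-state identity $\<\phi^+|(A\otimes \mathbf{1})|\phi^+\>=\tfrac{1}{d}\tr(A)$ (applied to each of the $N$ Bell pairs) to write
$$\sqrt{F}(|\Phi^+\>_{AB},|\Phi\>_{AB})=|\<\Phi^+|\Phi\>_{AB}|=\frac{1}{d^N}|\tr(\widetilde{O}_A)|.$$
The positivity of $v_\mu$ (they are entries of an eigenvector of $M_F$ at its maximal eigenvalue, which can be chosen non-negative by Perron--Frobenius-type reasoning) lets us drop the absolute value.

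Next, substituting the explicit block-diagonal expression $\widetilde{O}_A=\sqrt{d^N}\sum_{\mu\vdash N}\tfrac{v_\mu}{\sqrt{d_\mu m_\mu}}P_\mu$ from Eq.~\eqref{expOA1} and invoking $\tr(P_\mu)=d_\mu m_\mu$ yields directly $\tr(\widetilde{O}_A)=\sqrt{d^N}\sum_\mu v_\mu\sqrt{d_\mu m_\mu}$. Dividing by $d^N$ produces the general formula~\eqref{FPBT}.

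For the qubit specialization, I would switch to the spin-angular-momentum form $\widetilde{O}_A=\sum_{j=j_{\min}}^{N/2}\sqrt{\gamma(j)}\,\mathds{1}(j)$ from Eq.~\eqref{eq:qubitOA}, together with $\gamma(j)=\tfrac{2^{N+2}}{(N+2)m_jd_j}\sin^2\!\bigl(\tfrac{\pi m_j}{N+2}\bigr)$, $m_j=2j+1$, and $d_j=\tfrac{(2j+1)N!}{(N/2-j)!(N/2+j+1)!}$ from Eq.~\eqref{djmj}. Since $\tr\mathds{1}(j)=m_jd_j$, one gets
$$\tr(\widetilde{O}_A)=\sum_{j}\sqrt{\gamma(j)}\,m_jd_j=\frac{2^{(N+2)/2}}{\sqrt{N+2}}\sum_{j}\sqrt{m_jd_j}\,\sin\!\Bigl(\tfrac{\pi(2j+1)}{N+2}\Bigr),$$
because the factors $m_jd_j$ inside the sinusoid reduce to a single square root. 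Dividing by $d^N=2^N$ and using $\sqrt{m_jd_j}=(2j+1)\sqrt{N!/((N/2-j)!(N/2+j+1)!)}$, one extracts the common prefactor $\sqrt{N!/(2^{N-2}(N+2))}$, arriving at~\eqref{eq:l:FPBT}.

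I do not anticipate a genuine obstacle: the argument is essentially a bookkeeping exercise built from two already-established ingredients (the trace identity for maximally entangled states and the explicit Schur-basis form of $\widetilde{O}_A$). The only care needed is in the qubit rearrangement, namely tracking the $2^N$ denominator from $d^N$, the $2^{(N+2)/2}$ numerator inside $\sqrt{\gamma(j)}$, and the $\sqrt{N!}$ coming from $d_j$, so that the prefactor matches exactly $\sqrt{N!/(2^{N-2}(N+2))}$ rather than being off by an overall power of~$2$.
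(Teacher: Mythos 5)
Your proof is correct: the overlap computation $\sqrt{F}=\frac{1}{d^N}\lvert\tr(\widetilde{O}_A)\rvert$, the substitution of $\widetilde{O}_A=\sqrt{d^N}\sum_{\mu}\frac{v_\mu}{\sqrt{d_\mu m_\mu}}P_\mu$ with $\tr P_\mu=d_\mu m_\mu$, and the qubit bookkeeping (the prefactor $\sqrt{N!/(2^{N-2}(N+2))}$ all check out) are exactly the same route the paper takes in its proofs of Lemmas~\ref{Fbetween} and~\ref{ovpPBT}. The paper itself states this particular lemma without proof, importing it from~\cite{deg}, but your argument is the natural one and matches the paper's method for the sibling results.
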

The results of the all lemmas from this section for $d=2$ and $d>2$ are presented in table~\ref{tab_res_overlaps} and table~\ref{tab_res_overlaps} respectively located in Appendix~\ref{appD}. In the same Appendix we include figure~\ref{fig:my_label} and figure~\ref{fig:my_label2} which illustrate statements of Lemma~\ref{Fbetween2}, Lemma~\ref{ovpPBT}, and Lemma~\ref{l:FPBT} for qubits.

\section{Sending more bits with fewer ebits: efficient Port-based superdense coding} \label{superdense}
We now turn to superdense coding protocols induced by PBT schemes. In the case of ordinary teleportation, the underlying channel is given by an identity channel, and sending a single qubit would result in 2 bits of classical information. 

A number of varying PBT protocols gives rise to an equal number of superdense coding schemes. 

Suppose that Alice performs a measurement in dPBT, then the unnormalised post-measurement states $\widetilde{\chi}_{DB_i}$, for $1\leq i\leq N$ read:
\begin{equation}
\label{post_measurement}
\begin{split}
\widetilde{\chi}_{DB_i}&=\tr_{AC}\left[\Pi_i^{AC}\left(P^+_{CD}\ot P^+_{A_1B_1}\ot P^+_{A_2B_2}\ot \cdots \ot P^+_{A_NB_N}\right)\right]\\
&=\tr_{AC}\left[\Pi_i^{BD}\left(P^+_{CD}\ot P^+_{A_1B_1}\ot P^+_{A_2B_2}\ot \cdots \ot P^+_{A_NB_N}\right)\right]\\
&=\frac{1}{d^{N+1}}\Pi_i^{BD}.
\end{split}
\end{equation}
the second line is obtained by applying Lemma~\ref{techL1} from Appendix~\ref{appA} twice. We introduce normalised post-measurement state $\chi_{DB_i}$
\begin{equation}
\chi_{DB_i}:=\frac{\widetilde{\chi}_{DB_i}}{\tr \widetilde{\chi}_{DB_i}}=\frac{N}{d^{N+1}}\Pi_i^{BD},
\end{equation}
since $\tr \Pi_i^{BD}=\frac{d^{N+1}}{N}$ due to Theorem 5 in~\cite{deg}.

To estimate the performance of the superdense protocols, we introduce
\begin{equation}
\label{coefqik}
q_{i|k}:=d^2p_iF_{i|k},
\end{equation}
where terms $F_{i|k}$ represent fidelities between the post-measurement state $\chi_{DB_i}$ and maximally entangled state $P^+_{DB_k}$. These fidelities are of the form
\begin{equation}
F_{i|k}=\tr\left[\chi_{DB_i}P^+_{DB_k}\right]=\frac{N}{d^{N+1}}\tr\left[\Pi_i^{BD}P^+_{DB_k}\right]=\frac{N}{d^2}\tr\left[\Pi_i^{BD}\sigma_k^{BD}\right].
\end{equation}
In the last equality we used the definition of states $\sigma_k^{BD}=\frac{1}{d^{N-1}}\mathbf{1}_{\overline{B}_i}\ot P^+_{DB_k}$, where $\overline{B}_i$ denotes all systems $B$ but $i$. Consider two cases: $i=k$ and $i\neq k$. In the first case: 
\begin{equation}
F_{k|k}=\frac{N}{d^{2}}\tr\left[\Pi_k^{BD}\sigma_k^{BD}\right]=F.
\end{equation}
The above follows from the fact that $\tr\left[\Pi_k^{BD}\sigma_k^{BD}\right]$ does not depend on index $1\leq k\leq N$ and $F=\frac{1}{d^2}\sum_{i=1}^N\tr\left[\Pi_k^{BD}\sigma_k^{BD}\right]$, which is entanglement fidelity in dPBT. We see that $F_{k|k}\rightarrow 1$ with $N\rightarrow \infty$, since the same holds for entanglement fidelity $F$ in dPBT. 
We turn to computing $F_{i|k}$ for $i\neq k$. Form the relation $\sum_i q_{i|k}=1$, where the coefficients are defined through~\eqref{coefqik}, we write
\begin{equation}
\begin{split}
&q_{k|k}+\sum_{i\neq k}q_{i|k}=1,\\
&d^2p_kF_{k|k}+\sum_{i\neq k}d^2p_iF_{i|k}=1.
\end{split}
\end{equation}
In PBT schemes $p_i=1/N$, since all the ports are equally probable, and $F_{i|k}=\widetilde{F}$ for all $i\neq k$ due to the covariance property, so
\begin{align}
    &F+(N-1)\widetilde{F}=\frac{N}{d^2},\\
    &\widetilde{F}=\frac{N}{d^2(N-1)}-\frac{F}{N-1}=\frac{1}{d^2\left(1-\frac{1}{N}\right)}-\frac{F}{N-1}.
\end{align}
We see that $\widetilde{F} \rightarrow \frac{1}{d^2}$ for $N\rightarrow \infty$. An explicit expression for the entanglement fidelity $F$ in PBT can be turned into an explicit expression for $\widetilde{F}$.
We now present a protocol that beats the performance of the only known superdense protocol~\cite{ishizaka_remarks_2015} derived from the non-optimized dPBT protocol with fidelity~\cite{beigi_konig}:
\begin{equation}
\label{b1}
F\geq \frac{N}{N+d^2-1}.
\end{equation}
 The closed form for similar lower bounds is generally hard to compute, but we know that there exist PBT protocols with fidelity scaling as $1-O(1/N^2)$~\cite{ishizaka_quantum_2009,majenz2} (optimal qubit dPBT) or  non-optimal ones but better factor. For convenience, Table~\ref{table:entFPBT} lists the known expressions for fidelity derived in~\cite{majenz2}.
\begin{center}
\begin{table}[h!]
	\begin{tabular}{c|c}

Teleportation protocol & Entanglement fidelity $F$\\
\hline
	Non-optimised dPBT & $F=1-\frac{d^2-1}{4N}+O\left(N^{-3/2+\delta} \right) $ \\[0.1cm]

	Optimised dPBT & $F\geq 1-\frac{d^5+O\left(d^{9/2} \right) }{4\sqrt{2}N^2}+O\left(N^{-3} \right) $\\[0.1cm]

\hline

	\end{tabular}
\caption{Asymptotic behaviour of dPBT with arbitrary port dimension $d$ and port number $N$. All the results are taken from~\cite{majenz2}.}
	\label{table:entFPBT}
\end{table}
\end{center} 
We see that the bound~\eqref{b1} is weaker than those in Table~\ref{table:entFPBT}.
Using the expression for mutual information from~\cite{ishizaka_remarks_2015} and plugging in the value of fidelity from the non-optimized dPBT from table~\ref{table:entFPBT}, we get:
\begin{equation}
\label{IABour}
I(A:B)=\log_2\left(1-\frac{d^2-1}{4N}\right)+\frac{d^2}{N}\left(1-\frac{d^2-1}{4N}\right)\log_2(d^2).
\end{equation}
This function outperforms $I(A:B)$ from~\cite{ishizaka_remarks_2015} which exploits the bound from~\eqref{b1}.
Indeed, the function from~\eqref{IABour} achieves maximum for
\begin{equation}
N=\frac{3d^2(d^2-1)\log_e(d^2)+\sqrt{(d-d^3)^2(2d^2+d^2\log_e(d^2)-2)}}{2\left(1-d^2+4d^2\log_e(d^2)\right)}.
\end{equation}

Moreover, for any lower bound $F_*$ on fidelity in an arbitrary dPBT the maximum amount of information that the associated superdense coding protocol can transfer is given by:
\begin{equation}
    I(A:B)=\log_2(F_*)+\frac{d^2F_*}{N}\log_2(d^2).
\end{equation}

While we still used $F_*$ from the non-optimized dPBT above, we already get an improvement. We expect to have a dramatic improvement in the amount of communicated information and simultaneously the reduction of entanglement consumption when one uses bound $F_*$ from optimised dPBT protocols, for example by exploiting the second bound from Table~\ref{table:entFPBT}. 

\section{Discussion}
In this paper, we introduce a novel variant of the PBT protocol called the minimal PBT. This protocol meets the minimal set of requirements that define a feasible PBT scheme. We analyze its efficiency and show that it over-performs optimized pPBT even with the resource state in a form of $N$ pairs of maximally entangled states. In parallel, it offers the same efficiency as the pre-existing PBT schemes when one is interested in the entanglement fidelity of the transmission.
In the second part, we investigate the possibility of conversion between different types of PBT, namely, we focus on conversion between probabilistic schemes to deterministic ones, and vice versa. We present the general recipe for such conversion and we show how it applies to existing variants of pPBT and dPBT with their connection to mPBT. We also derive the efficiency of such converted schemes.
In the next part of the manuscript, we discuss the application of existing knowledge on the deterministic PBT to super-dense coding schemes, and we show the possibility of transmission of more classical bits with lower consumption of shared maximally entangled pairs (ports). Finally, we present a detailed analysis and comparison of the resource states in deterministic and probabilistic PBT by considering their mutual fidelities. We show that mutual fidelity between resource states decreases with the number of ports showing that the considered states become more distant in the trace norm.

We also leave two important open questions. The first one is to explore derived expressions for the efficiency of the mPBT protocol in the asymptotic limits when the number of ports $N$ tends to infinity, analogously as it was done in~\cite{christandl2021asymptotic}.  Applying similar reasoning we can rid off group theoretical parameters like dimensions and multiplicities of irreps under interest and investigate their asymptotic behavior in terms of local dimension and the number of ports. Another important topic is to investigate noise influence on the performance of all known variants of PBT protocols, including defined here the mPBT scheme. It is well known that the noise in the practical implementation of all quantum information protocols is unavoidable and only having a detailed analysis in a real-world scenario regime can tell us about the real potential of discussed in the literature schemes. This is a very important problem, especially in the context of the recent developments in the PBT area -- the first model of the PBT formalism in continuous variables~\cite{pereira2021characterising}. The study of the impact of noise is therefore most natural in this setting, especially from the point of view of possible implementations. 

\section*{Acknowledgements}
Authors thank Ng Chung Lok Andrew from Trinity College, Cambridge for finding the closed-form expression in Eq.~\eqref{explicitformula}. S.S. acknowledges support from the Royal Society University Research Fellowship. 
M.S. is supported through grant Sonatina 2, UMO-2018/28/C/ST2/00004 from the Polish National Science Centre. 

\appendix 
\section{Summary of known results for $d\geq 2$ for pPBT and dPBT} 
\label{summary}
We collect here  results concerning explicit form of measurements and Alice's optimising operations $O_A,\widetilde{O}_A$. See~\cite{Studzinski2017,StuNJP} for more detailed discussion. Due to the covariance property of the measuremets it suffices to present a single measurement, for example corresponding to $i=N$.
\begin{enumerate}
    \item \textit{Non-optimal pPBT.} In this case $O_A=\mathbf{1}_A$ with measurement $M_N^{AC}$ of the form (see Section 2.5.1 in~\cite{Studzinski2017}):
    \begin{equation}
    \label{nonoptmeasurements}
        M_N^{AC}=d\sum_{\alpha \vdash N-1}\frac{1}{\gamma_{\mu^*}(\alpha)}\left(P_{\alpha} \otimes P^+_{N,n}\right),
    \end{equation}
    where $P_{\alpha}$ is a Young projector introduced in~\eqref{Yng_proj}, $P^+_{N,n}$ denotes the maximally entangled state between respective systems, and 
    \begin{equation}
        \gamma_{\mu^*}(\alpha):=\min_{\mu\in\alpha}\frac{1}{\gamma_{\mu}(\alpha)}=\frac{1}{N}\min_{\mu\in\alpha}\frac{m_{\alpha}d_{\mu}}{m_{\mu}d_{\alpha}}.
    \end{equation}
    The minimisation is taken over all $\mu$ which can be obtained from given $\alpha$ by adding a single box, as it is described in Section~\ref{math_intro} and Figure~\ref{YngBox} within it. The quantity $\gamma_{\mu}(\alpha)$ can be easily connected with eigenvalues $\lambda_{\mu}(\alpha)$ from~\eqref{llambda} of the operator $\rho$ in~\eqref{rho_spectral}:
    \begin{equation}
        \lambda_{\mu}(\alpha)=\frac{1}{d^N}\gamma_{\mu}(\alpha).
    \end{equation}
    The probability of success $p_{succ}$ in this variant is given by the expression (see Theorem 3 in~\cite{Studzinski2017}):
    \begin{equation}
	\label{eq:p_succ}
	p_{succ}=\frac{1}{d^N}\sum_{\alpha \vdash N-k}m_{\alpha}^2\mathop{\operatorname{min}}\limits_{\mu\in\alpha}\frac{d_{\mu}}{m_{\mu}},
    \end{equation}
    where the minimum is taken over all Young frames $\mu$ which can be obtained from a given Young frame $\alpha \vdash N-1$ by adding a single box (see Figure~\ref{YngBox} for the details). 
    Quantities $m_{\alpha},m_{\mu}$, and $d_{\mu}$ denote multiplicities and dimension of irreducible representations of corresponding symmetric group in the Schur-Weyl duality.
    \item \textit{Optimal pPBT.} In this case Alice's optimising operation is non-trivial, however optimal measurements differ only by coefficients (see Section 2.5.2 in~\cite{Studzinski2017}):
    \begin{equation}
    \label{measurement_opt}
        O_A=\sqrt{d^N}\sum_{\mu \vdash N}\sqrt{\frac{g(N)m_{\mu}}{d_{\mu}}}P_{\mu},\qquad M_N^{AC}=\sum_{\alpha \vdash N-1}u_{\alpha}\left(P_{\alpha}\otimes P^+_{N,n}\right)
    \end{equation}
    with
    \begin{equation}
    \label{measurement_opt_coeff}
        g(N)=\frac{1}{\sum_{\nu\in N}m^2_{\nu}},\qquad u_{\alpha}=\frac{d^{N+1}}{N}\frac{g(N)m_{\alpha}}{d_{\alpha}}.
    \end{equation}
    The probability of success $p_{succ}$ in the optimal scheme is given by the a compact expression (see Theorem 4 in~\cite{Studzinski2017}):
    \begin{equation}
\label{pSuccd}
        p_{succ}=1-\frac{d^2-1}{N+d^2-1}.
    \end{equation}
    This expression for $d=2$ reduces to expression (56) derived in~\cite{ishizaka_quantum_2009}
    \begin{equation}
        p_{succ}=1-\frac{3}{N+3}.
    \end{equation}
    \item \textit{Optimal dPBT.} Alice's optimising operation in the optimal variant is of the form (see Proposition 32 in~\cite{StuNJP}):
    \begin{equation}
    \label{expOA}
    \widetilde{O}_A=\sqrt{d^N}\sum_{\mu \vdash N}\frac{v_{\mu}}{\sqrt{d_{\mu}m_{\mu}}}P_{\mu},
    \end{equation}
    where $v_{\mu}\geq 0$ are entries of an eignevector corresponding to the maximal eigenvalue of the teleportation matrix $M_F$ used for computation of entanglement fidelity in optimal PBT (see Section 4  in~\cite{StuNJP}). 
    
    In the case when parties exploit maximally entangled state as a resource and Alice applies SRM measurements to run the protocol, the corresponding entanglement fidelity equals to (see Theorem 12 in~\cite{Studzinski2017}):
    \begin{equation}
\label{F_non}
F_{det}=\frac{1}{d^{N+2}}\sum_{\alpha \vdash N-1}\left(\sum_{\mu \in \alpha}\sqrt{d_{\mu}m_{\mu}} \right)^2,
\end{equation}
where $d_{\mu},m_{\mu}$ denote dimension and multiplicity of irreps of $S(N)$ in the Schur-Weyl duality. In particular case of qubits we can use expression (29) from~\cite{ishizaka_quantum_2009}, which is of the form:
\begin{equation}
\label{F_nonqubit}
F_{det}=\frac{1}{2^{N+3}}\sum_{k=0}^N\left(\frac{N-2k-1}{\sqrt{k+1}}+\frac{N-2k+1}{\sqrt{N-k+1}}\right)^2\binom{N}{k}.
\end{equation}

When Alice optimises over a resource state and measurements the entanglement fidelity can be computed as:
\begin{equation}
\label{Fopt}
F_{det}=\frac{1}{d^2}||M_F^d||_{\infty},
\end{equation}
where $M_F^d$ is the principal minor of dimension $d$ of the teleportation matrix $M_F$ introduced in Section 4 of~\cite{StuNJP}. the symbol $||\cdot||_{\infty}$ denotes the infinity norm of a matrix. 
For qubits, the expression~\eqref{Fopt} reduces to (see expression (41) in~\cite{ishizaka_quantum_2009} and Section 5.3 in~\cite{StuNJP}):
\begin{equation}
\label{Foptqubit}
F_{det}=\operatorname{cos}^2\left(\frac{\pi}{N+2}\right).
\end{equation}
\end{enumerate}

\section{Several technical facts}
\label{appA}
Here we prove a statement used later in Section~\ref{sec:fids}.
\begin{observation}
\label{obs}
The operations $O_A, \widetilde{O}_A$ in optimal pPBT and dPBT respectively satisfy the chain of equalities
\begin{equation}
O_A=O_A^{\dagger}=\overline{O}_A=O_A^{T},\qquad \widetilde{O}_A=\widetilde{O}_A^{\dagger}=\overline{\widetilde{O}}_A=\widetilde{O}_A^T.
\end{equation}
\end{observation}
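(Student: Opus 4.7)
The plan is to observe that both $O_A$ and $\widetilde{O}_A$ are, by the explicit formulae in~\eqref{measurement_opt} and~\eqref{expOA}, linear combinations of Young projectors $P_{\mu}$ with \emph{real} non-negative coefficients (namely $\sqrt{g(N)m_{\mu}/d_{\mu}}$ and $v_{\mu}/\sqrt{d_{\mu}m_{\mu}}$, where $g(N)>0$ and $v_{\mu}\geq 0$). Consequently, it suffices to show that each Young projector $P_{\mu}$ satisfies the chain
\begin{equation}
P_{\mu}=P_{\mu}^{\dagger}=\overline{P}_{\mu}=P_{\mu}^{T},
\end{equation}
for then the claim follows by linearity.

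First I would invoke Hermiticity: since each $P_{\mu}$ is an orthogonal projector onto an irreducible isotypic component, we have $P_{\mu}=P_{\mu}^{\dagger}$. Next I would use the defining formula~\eqref{Yng_proj} to argue that $P_{\mu}$ is in fact a \emph{real} matrix in the computational basis. Two ingredients enter here: (i) the operators $V(\sigma)$ defined in~\eqref{repV} are real $\{0,1\}$-matrices since they merely permute basis vectors, and (ii) the irreducible characters $\chi^{\mu}$ of $S(n)$ are integer-valued (this is a standard fact for the symmetric group, but even mere reality suffices). Hence $\overline{P}_{\mu}=P_{\mu}$.

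Combining the two, the Hermiticity relation $P_{\mu}=P_{\mu}^{\dagger}=\overline{P}_{\mu}^{T}$ and the reality $\overline{P}_{\mu}=P_{\mu}$ give $P_{\mu}=P_{\mu}^{T}$, closing the chain. Extending by linearity with real coefficients yields the corresponding chain for $O_A$ and $\widetilde{O}_A$ simultaneously, which is the statement of the observation.

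The only point that requires a moment's care is confirming that no complex coefficient sneaks in: in particular, the numbers $v_{\mu}$ are entries of an eigenvector of the \emph{real symmetric} teleportation matrix $M_F$ (see~\cite{StuNJP}), so they may be chosen real, and the square roots $\sqrt{g(N)m_{\mu}/d_{\mu}}$ and $1/\sqrt{d_{\mu}m_{\mu}}$ are manifestly real and positive. This is the main thing to verify, but it is essentially a bookkeeping check rather than a genuine obstacle.
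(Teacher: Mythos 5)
Your proposal is correct and follows essentially the same route as the paper, whose entire justification is the one-line remark that $O_A$ and $\widetilde{O}_A$ are real linear combinations of Young projectors; you simply fill in the details (reality of the $V(\sigma)$ as permutation matrices, integrality of the characters $\chi^{\mu}$, Hermiticity of the $P_{\mu}$, and reality of the coefficients). No gap.
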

Indeed, the optimising operations are  given through linear combination with real coefficients of Young projectors $P_{\mu}$ defined in~\eqref{Yng_proj}, see~\cite{Studzinski2017,StuNJP}. 

\begin{lemma}\label{techL1}
Let $\{|i\>\}_{i=1}^d$ be a basis and $|\widetilde{\psi}_+\>=\sum_{i=1}^d|i\>\otimes|i\>$ be the unnormalised maximally entangled state, then for any operator $X$ we have
\begin{equation}
\left(\mathbf{1}\otimes X\right)|\widetilde{\psi}_+\>=\left(X^T\otimes \mathbf{1}\right)|\widetilde{\psi}_+\>,
\end{equation}
where $X^T$ denotes transposition of $X$ with respect to the basis $\{|i\>\}_{i=1}^d$.
\end{lemma}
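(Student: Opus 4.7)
The plan is to prove this by a direct index-level computation in the chosen basis, since the statement is really just the familiar ``ricochet'' (or ``transpose trick'') identity for the unnormalised maximally entangled vector. The cleanest route is to expand both sides in the product basis $\{|i\>\otimes|j\>\}_{i,j=1}^d$ and verify that the coefficients match entry by entry.

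First, I would fix the matrix elements $X_{ij}:=\<i|X|j\>$ in the given basis, so that by definition $(X^T)_{ij}=X_{ji}$. Then for the left-hand side I would write
\begin{equation}
\left(\mathbf{1}\otimes X\right)|\widetilde{\psi}_+\>=\sum_{i=1}^d |i\>\otimes X|i\>=\sum_{i,j=1}^d X_{ji}\,|i\>\otimes |j\>,
\end{equation}
using $X|i\>=\sum_j X_{ji}|j\>$. For the right-hand side I would expand similarly
\begin{equation}
\left(X^T\otimes \mathbf{1}\right)|\widetilde{\psi}_+\>=\sum_{j=1}^d X^T|j\>\otimes |j\>=\sum_{i,j=1}^d (X^T)_{ij}\,|i\>\otimes |j\>=\sum_{i,j=1}^d X_{ji}\,|i\>\otimes |j\>.
\end{equation}
The two coefficient arrays coincide, so the two vectors are equal.

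There is no real obstacle here: the only subtlety worth flagging explicitly in the write-up is that the transpose is defined with respect to the very same basis $\{|i\>\}$ used to construct $|\widetilde{\psi}_+\>$, because the identity is basis-dependent in the sense that changing the reference basis in $|\widetilde{\psi}_+\>$ changes which map plays the role of $X^T$. I would note this as a brief remark and then conclude, since the statement holds for arbitrary (not necessarily invertible or Hermitian) $X$ by linearity of the argument above.
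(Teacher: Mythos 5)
Your computation is correct and is exactly the ``direct inspection'' the paper itself invokes (the paper gives no further detail), so the two arguments coincide in approach. The index bookkeeping is right on both sides, and your remark that the transpose must be taken in the same basis used to define $|\widetilde{\psi}_+\>$ is a worthwhile clarification.
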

This lemma can be proven by direct inspection.

\section{Square root fidelity and fidelity}
\label{SRfid}
To investigate the closeness of the resource states we use notion of the \textit{square root fidelity} $\sqrt{F}$~\cite{Nielsen-Chuang}, which for two arbitrary quantum states $\rho,\sigma$ is given as
\begin{equation}
\label{sqrtF}
\sqrt{F}(\rho,\sigma):=\tr\left(\sqrt{\sqrt{\rho}\sigma \sqrt{\rho}}\right).
\end{equation}
The connection of $\sqrt{F}$ with the Uhlmann's fidelity $F$~\cite{Uhlmann:1976,RJozsa} is given by taking a square of the expression~\eqref{sqrtF}:
\begin{equation}
\label{UhlmannF}
F(\rho,\sigma):=(\sqrt{F}(\rho,\sigma))^2=\left[\tr\left(\sqrt{\sqrt{\rho}\sigma \sqrt{\rho}}\right)\right]^2.
\end{equation}
When one quantum state is pure, i.e. $\rho_{\psi}:=|\psi\>\<\psi|$ and the second one $\sigma$ is mixed, then one has
\begin{equation}
F(\rho_{\psi},\sigma):=(\sqrt{F}(\rho_{\psi},\sigma))^2=|\<\psi|\sigma|\psi\>|^2.
\end{equation}
When two quantum states $\rho_{\psi},\sigma_{\phi}$ are pure, i.e. $\rho_{\psi}:=|\psi\>\<\psi|, \sigma_{\phi}:=|\phi\>\<\phi|$, which is exactly the case in this paper, the above expression reduces to
\begin{equation}
F(\rho_{\psi},\sigma_{\phi}):=(\sqrt{F}(\rho_{\psi},\sigma_{\phi}))^2=|\<\psi|\phi\>|^2.
\end{equation}
For two arbitrary density operators $\rho,\sigma$ we can define the trace distance $\delta(\rho,\sigma)$ as
\begin{equation}
    \delta(\rho,\sigma):=\frac{1}{2}\tr\left(|\rho-\sigma|\right),
\end{equation}
we can upper and lower bound it using the notion of the fidelity $F$ by Fusch-van de Graaf inequalities:
\begin{equation}
1-\sqrt{F(\rho,\sigma)}\leq \delta(\rho,\sigma) \leq \sqrt{1-F(\rho,\sigma)}.
\end{equation}
In the particular case, when both states are pure, then the above inequalities simplify and give:
\begin{equation}
\delta(\psi,\phi)=\sqrt{1-F(\psi,\phi)}.
\end{equation}

\section{PBT resource state comparison}
\label{appD}

\begin{figure}
    \centering
    \includegraphics[width=0.45\textwidth]{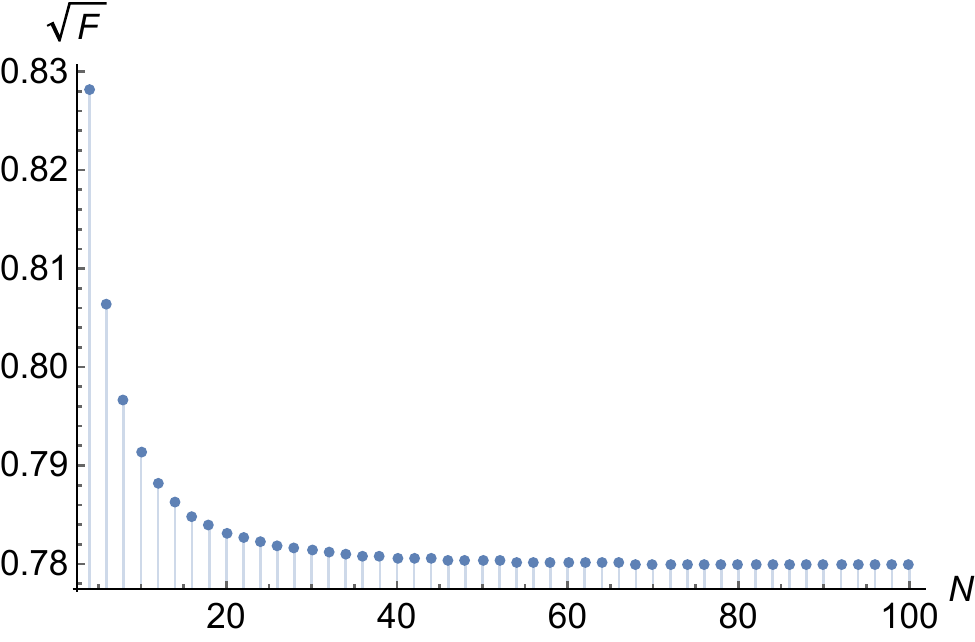}
    \includegraphics[width=0.45\textwidth]{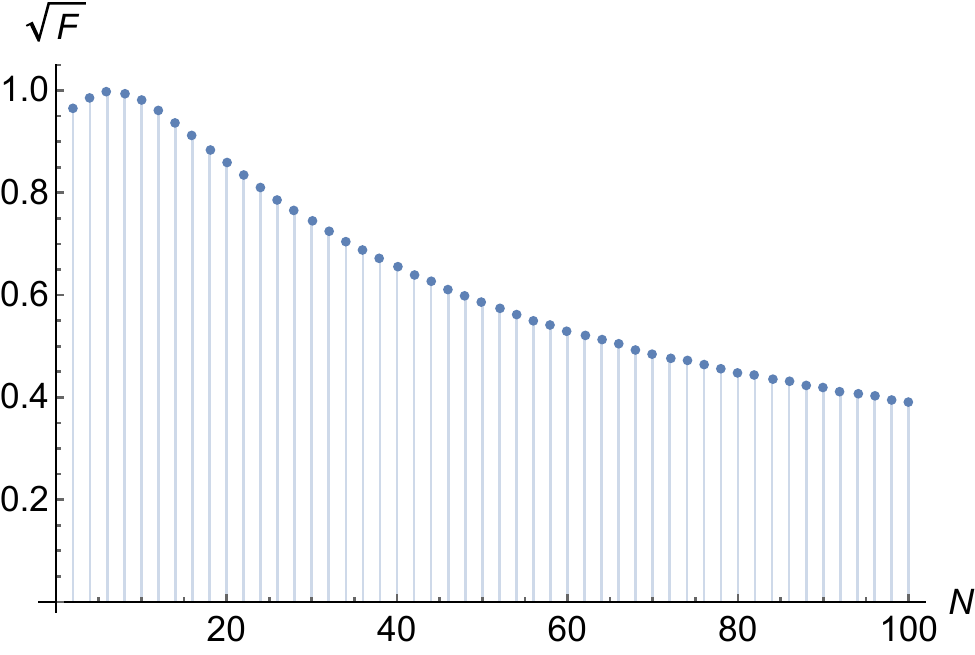}
    \caption{Left hand side: Overlap between states for optimal dPBT and optimal pPBT calculated for qubits by using expression~\eqref{eq:Fbetween} from Lemma~\ref{Fbetween2}. We see that for these two states the overlap between them saturates on the value 0.778. Right hand side: Overlap between states for non-optimal and optimal dPBT for qubits by using expression~\eqref{eq:l:FPBT} from Lemma~\ref{l:FPBT}. The maximal value of the overlap which is $F=0.9977$ is attained for $N=6$. In the asymptotic limit the both states are orthogonal. This plot has been firstly obtained in~\cite{deg} in the context of resource state degradation. In both figures we see completely different behaviour of the overlaps between the resource states.}
    \label{fig:my_label}
\end{figure}

\begin{figure}
    \centering
    \includegraphics[width=0.45\textwidth]{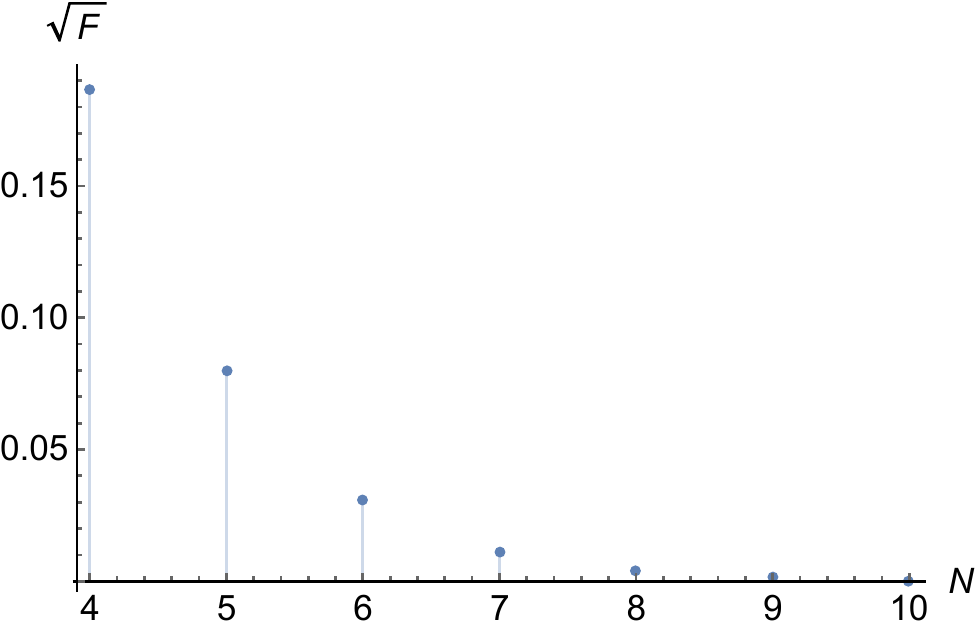}
    \caption{Overlap between states for non-optimal pPBT and optimal pPBT calculated for qubits by using expression~\eqref{72} from Lemma~\ref{ovpPBT}. We see that for these two states the overlap between them approaches 0 even for a very small number of ports $N$ making them orthonormal. This happens much faster than for the overlap between states for non-optimal and optimal dPBT depicted in Figure~\ref{fig:my_label}.}
    \label{fig:my_label2}
\end{figure}
For the reader convenience we collate results from the above lemmas in Table~\ref{tab_res_overlaps} for qubits and in Table~\ref{tab_res_overlaps2} for qudits.
\begin{table}[h!]
    \centering
    \begin{tabular}{c|cc}
      resource state  &  $|\Psi^+\>_{AB}$  &  $(O_A\otimes \mathbf{1}_B)|\Psi^+\>_{AB}$ \\ \hline 
\\[-1em]
       $|\Psi^+\>_{AB}$   &  1 & \large{$\sqrt{\frac{6}{2^N(N+1)(N+2)(N+3)}}\sum_{j=j_{\min}}^{N/2}\frac{(2j+1)^2}{\sqrt{\left(\frac{N}{2}-j\right)!\left(\frac{N}{2}+j+1\right)!}}$}\\ \hline 
\\[-1em]
       $(\widetilde{O}_A\otimes \mathbf{1}_B)|\Psi^+\>_{AB}$ & \large{$\sqrt{\frac{N!}{2^{N-2}(N+2)}}\sum_{j=j_{\min}}^{N/2}\frac{(2j+1)\operatorname{sin}\frac{\pi(2j+1)}{N+2}}{\sqrt{(\frac{N}{2}-j)!(\frac{N}{2}+j+1)!}}$} & \large{$\frac{2}{N+2}\sqrt{\frac{6}{(N+1)(N+3)}}\sum_{j=j_{\min}}^{N/2}(2j+1)\operatorname{sin}\left(\frac{\pi (2j+1)}{N+2}\right)$}\\
    \end{tabular}
    \caption{Results evaluated in Section~\ref{sec:fids} in the qubit case $d=2$. It contains expressions for the square-root fidelities $\sqrt{F}$ between the resource states in all variants of PBT protocols, for an arbitrary number of ports $N$. All the results are presented in spin angular momentum formalism. Operators $O_A,\widetilde{O}_A$ are optimising Alice's operations in optimal probabilistic and deterministic scheme respectively.}
    \label{tab_res_overlaps}
\end{table}

\begin{table}[h!]
    \centering
    \begin{tabular}{c|cc}
       resource state  &  $|\Psi^+\>_{AB}$  &  $(O_A\otimes \mathbf{1}_B)|\Psi^+\>_{AB}$ \\ \hline 
\\[-1em]
       $|\Psi^+\>_{AB}$   &  1 & \large{$\frac{1}{d^N}\sum_{\mu \vdash N}m_{\mu}\sqrt{g(N)d_{\mu}m_{\mu}}$}\\ \hline 
\\[-1em]
       $(\widetilde{O}_A\otimes \mathbf{1}_B)|\Psi^+\>_{AB}$ & \large{$\frac{1}{\sqrt{d^N}}\sum_{\mu \vdash N} v_{\mu}\sqrt{d_{\mu}m_{\mu}}$} & \large{$\sum_{\mu \vdash N}\frac{v_{\mu}m_{\mu}}{\sqrt{\sum_{\nu \vdash N}m_{\nu}^2}}$}\\
    \end{tabular}
    \caption{Results evaluated in Section~\ref{sec:fids}. It contains expressions for the square-root fidelities $\sqrt{F}$ between the resource states in all variants of PBT protocols, for an arbitrary number of ports $N$ and port dimension $d$. Operators $O_A,\widetilde{O}_A$ are optimising Alice's operations in optimal probabilistic and deterministic scheme respectively.}
    \label{tab_res_overlaps2}
\end{table}
\newpage
\section{Source code}
Here we provide the source code that generates all the requisite quantities. This Python code was executed on SageMath 9.0+ (Python 3) or later.

\begin{lstlisting}[language=Python]
def hook_formula(mu):
    return factorial(add(k for k in mu))/prod(mu.hook_length(i,j) for i,j in mu.cells())

import matplotlib.pyplot as plt
import numpy as np
from scipy.special import factorial
plt.style.use('seaborn-whitegrid')


d=2 #height
psucc_vals = []
fid_vals = []
#Ishizaka Hiroshima d=2, MES, popt
HI_MES_popt = []
#Ishizaka Hiroshima d=2, optimal state, popt
HI_opt_popt = []

#ratio of F of non-optimal dPBT and our p_succ
F_p_ratio = []
p_vals = []

for N in range (2,60):
#    print("item: ", N)
    p = Partitions(N-1,max_length=d) #generates all the shapes of height d
    p.list()

    #d_mu is the number of standard YTs and mu_alpha is the number of Semi-standard YTs, 
    #P_succ = sum_alpha sum_mu=(alpha + cell) d_mu*m_alpha

    res = 0

    
    for part in p:
        ssyt_alpha= SemistandardTableaux(part,max_entry=d) #semistandard tableaux that
        #correspond to a given partition
        m_alpha = ssyt_alpha.cardinality()
   
        #generate a list of partitions where we add a box to the existing one 
        part_plusone = list(part.up_list())
        #we want only the partitions which are obtained from part by adding 1 box 
        #that have height <=d
        part_plusone = [x for x in part_plusone if len(x)<=d] 

        for partplus in part_plusone:
            syt_mu = StandardTableaux(partplus)
            d_mu = syt_mu.cardinality()

            res +=d_mu*m_alpha         
            

    res = res/d**(N+1)
    psucc_vals.append(res)
    p_vals.append(1-(d^2-1)/(N+d^2-1))
    #p_vals.append(1-(N+2)/(2**(N+1)))
    
    res1 = 0;

    for part in p:
        part_res = 0;
        ssyt_alpha= SemistandardTableaux(part,max_entry=d) #semistandard tableaux that 
        #correspond to a given partition

        #generate a list of partitions where we add a box to the existing one 
        part_plusone = list(part.up_list())
        part_plusone = [x for x in part_plusone if len(x)<=d] #we want only partitions 
        #which are obtained from part by adding 1 box that have height <=d
        for partplus in part_plusone:
            syt_mu = StandardTableaux(partplus)
            ssyt_mu = SemistandardTableaux(partplus,max_entry=d)
            d_mu = syt_mu.cardinality()
            m_mu = ssyt_mu.cardinality()
            part_res += sqrt(d_mu*m_mu)
        res1 += part_res*part_res
    res1 = 1/d^(N+2)*res1
    fid_vals.append(res1)

    
    #Ishizaka Hiroshima d=2, MES, popt
    start =float(0.0)
    arr = []
    if (N-1) % 2 !=0: 
        start=0.5
    for s in np.arange (start, ((N-1)/2), 1):
        Num = (2*s+1)*(2*s+1)*factorial(N)
        Denom1 = (factorial((N-1)/2-s))
        Denom2 = (factorial((N+3)/2+s))
        arr.append(Num/(Denom1*Denom2))
    val = 1/2^(N)*sum(arr)

    HI_MES_popt.append(val)
        
    #Ishizaka Hiroshima d=2, optimal state, popt
    HI_opt_popt.append(1-3.0/(N+3))
    
    F_p_ratio.append(res1/res)
    
#line1 = plt.scatter(range(3,60),psucc_vals, label ='Psucc',linestyle='--')
#line2 = plt.scatter(range(3,60),p_vals, label ='Pvals',linestyle='dotted')
#line2, = plt.plot(fid_vals, label ="fidelity_opt",linewidth=2)
#line3, = plt.plot(HI_MES_popt, label='HI MES Psucc', linestyle='dotted')
#line4 = plt.scatter(range(3,20),HI_opt_popt, label='HI OPT Psucc', linestyle='dotted')

line5 = plt.scatter(range(2,60),F_p_ratio)
line6 = plt.scatter(range(2,60),fid_vals)

plt.legend((line5, line6),('F_det/p_mPBT', 'F_det'),numpoints=1, loc='lower right')

plt.xlabel("N")
plt.ylabel("Fidelity")
#plt.legend((line1, line2),('p_mPBT', 'p_OPT'),numpoints=1, loc='lower right')

\end{lstlisting}

\newpage
\bibliographystyle{unsrt}
\bibliography{biblio2}
\end{document}